\documentclass[11pt]{article}
\usepackage{amsmath,amsfonts, amsthm, amssymb}
\usepackage{verbatim}
\usepackage[usenames,dvipsnames]{color}

\usepackage[margin=1in]{geometry}
\usepackage{hyperref}
\usepackage{palatino}
\usepackage{todonotes}
\parskip=0.5ex

\renewcommand{\epsilon}{\varepsilon}
\providecommand{\eps}{\epsilon}
\renewcommand{\le}{\leqslant}
\renewcommand{\leq}{\leqslant}

\renewcommand{\geq}{\geqslant}

\newtheorem{thm}{Theorem}[section]\newtheorem*{rethm}{Theorem}
    \newtheorem{lemma}[thm]{Lemma}
    \newtheorem{cor}[thm]{Corollary}\newtheorem{prop}[thm]{Proposition}
    
\theoremstyle{definition}
    \newtheorem{defn}[thm]{Definition}
\theoremstyle{remark}\newtheorem*{rmk}{Remark}

\providecommand{\F}{\mathbb{F}}

\providecommand{\abs}[1]{\lvert#1\rvert}

\DeclareMathOperator{\poly}{poly}

\title{Deletion codes in the high-noise and high-rate regimes}
\author{Venkatesan Guruswami\thanks{Some of this work was done when
    the author was a visiting researcher at Microsoft Research New
    England. Email: {\tt guruswami@cmu.edu}. Research supported in
    part by NSF grants CCF-0963975 and CCF-1422045.} \and Carol
  Wang\thanks{Part of this work was done on a visit to Microsoft
    Research New England. Email: {\tt wangc@cs.cmu.edu}. Supported by
    an NSF Graduate Research Fellowship and NSF CCF-0963975.}}

\date{Computer Science Department \\ Carnegie Mellon University \\ Pittsburgh, PA}

\setcounter{page}{0}

\begin{document}
\maketitle
\thispagestyle{empty}

\begin{abstract}
The noise model of deletions poses significant challenges in coding theory, with basic questions like the capacity of the binary deletion channel still being open. In this paper, we study the harder model of {\em worst-case}
deletions, with a focus on constructing efficiently decodable codes for the two extreme regimes of high-noise and high-rate. Specifically, we construct polynomial-time decodable codes with the following trade-offs (for any $\eps > 0$):
\begin{enumerate}
\item[(i)] Codes that can correct a fraction $1-\eps$ of deletions with rate $\poly(\eps)$ over an alphabet of size $\poly(1/\eps)$;
\item[(ii)] Binary codes of rate $1-\tilde{O}(\sqrt{\eps})$ that can correct a fraction $\eps$ of deletions; and
\item[(iii)] Binary codes that can be {\em list decoded} from a fraction $(1/2-\eps)$ of deletions with rate $\poly(\eps)$.
\end{enumerate}

\smallskip
Our work is the first to achieve the qualitative goals of correcting a
deletion fraction approaching $1$ over bounded alphabets, and
correcting a constant fraction of bit deletions with rate approaching
$1$ over a fixed alphabet. The above results bring our understanding
of deletion code constructions in these regimes to a similar level as
worst-case errors.
\end{abstract}

\newpage
\section{Introduction}

This work addresses the problem of constructing codes which can be efficiently corrected from a constant fraction of worst-case deletions. In contrast to erasures, the locations of deleted symbols are {\em not} known to the decoder, who receives only a subsequence of the original codeword. The deletions can be thought of as corresponding to errors in synchronization during communication. The loss of position information makes deletions a very challenging model to cope with, and our understanding of the power and limitations of codes in this model significantly lags behind what is known for worst-case errors.

The problem of communicating over the {\it binary deletion channel}, in which each transmitted bit is deleted 
independently with a fixed probability $p$, has been a subject of much study (see the excellent survey by Mitzenmacher~\cite{m-survey} for more background and references). However, even this easier case is not well-understood. In particular, the capacity of the binary deletion channel remains open, although it is known to approach $1-h(p)$ as $p$ goes to $0$, where $h(p)$ is the binary entropy function (see~\cite{DG,gallager,zigangirov} for lower bounds and~\cite{KMS, KM} for upper bounds), and it is known to be positive (at least $(1-p)/9$)~\cite{MD06}) even as $p \to 1$. 

The more difficult problem of correcting from adversarial rather than random deletions has also been studied, but with a focus on correcting a constant {\em number} (rather than fraction) of deletions. It turns out that obtaining optimal trade-offs  to correct a single deletion is already a non-trivial and rich problem (see~\cite{sloane-deletion}), and we do not yet have a good understanding for two or more deletions. 

Coding for a constant {\em fraction} of adversarial deletions has been considered previously by Schulman and Zuckerman~\cite{SZ}. They construct constant-rate binary codes which are efficiently decodable from a small constant fraction of worst-case deletions and insertions, and can also handle a small fraction of transpositions. The rate of these codes are bounded away from $1$, whereas existentially one can hope to achieve a rate approaching $1$ for a small deletion fraction.

The central theoretical goal in error-correction against any specific noise model is to understand the combinatorial trade-off between the rate of the code and noise rate that can be corrected, and to construct codes with efficient error-correction algorithms that ideally approach this optimal trade-off. While this challenge is open in general even for the well-studied and simpler model of errors and erasures, in the case of worst-case deletions, our knowledge has even larger gaps. (For instance, we do not know the largest deletion fraction which can be corrected with positive rate for any fixed alphabet size.)
Over large alphabets that can grow with the length of the code, we can include the position of each codeword symbol as a header that is part of the symbol. This reduces the model of deletions to that of erasures, where simple optimal constructions (eg. Reed-Solomon codes) are known. 

Given that we are far from an understanding of the best rate
achievable for any specified deletion fraction, in this work we focus
on the two extreme regimes --- when the deletion fraction is small
(and the code rate can be high), and when the deletion fraction
approaches the maximum tolerable value (and the code rate is
small). Our emphasis is on constructing codes that can be efficiently
encoded and decoded, with trade-offs not much worse than
random/inefficient codes (whose parameters we compute in
Section~\ref{sec:exis}).  Our results, described next, bring the level
of knowledge on efficient deletion codes in these regimes to a roughly
similar level as worst-case errors. There are numerous open questions,
both combinatorial and algorithmic, that remain open, and it is our
hope that the systematic study of codes for worst-case deletions
undertaken in this work will spur further research on good
constructions beyond the extremes of low-noise and high-noise.

\subsection{Our results}

The best achievable rate against a fraction $p$ of deletions cannot exceed $1-p$, as we need to be able to recover the message from the first $(1-p)$ fraction of codeword symbols. As mentioned above, over large (growing) alphabets this trade-off can in fact be achieved by a simple reduction to the model of erasures. Existentially, as we show in Section~\ref{sec:exis}, for any $\gamma > 0$, it is easy to show that there are codes of rate $1-p-\gamma$ to correct a fraction $p$ of deletions over an alphabet size that depends only on $\gamma$. For the weaker model of erasures, where the receiver knows the locations of erased symbols, we know explicit codes,  namely certain algebraic-geometric codes~\cite{shum-etal} or expander based constructions~\cite{AEL,GI-ieeeit}, achieving the optimal trade-off (rate $1-p-\gamma$ to correct a fraction $p$ of erasures) over alphabets growing only as a function of $1/\gamma$. For deletions, we do not know how to construct codes with such a trade-off efficiently. However, in the high-noise regime when the deletion fraction is $p=1-\eps$ for some small $\eps > 0$, we are able to construct codes of rate $\poly(\eps)$
over an alphabet of size $\poly(1/\eps)$. Note that an alphabet of size at least $1/\eps$ is needed, and the rate can be at most $\eps$, even for the simpler model of erasures, so we are off only by polynomial factors.

\begin{rethm}[Theorem~\ref{thm:const-adv}] Let $1/2>\epsilon>0$. There is an explicit code of rate $\Omega(\epsilon^2)$ 
and alphabet size $\poly(1/\epsilon)$ 
which can be corrected from a $1-\epsilon$ fraction of worst-case deletions. 

Moreover, this code can be constructed, encoded, and decoded in time $N^{\poly(1/\epsilon)}$, where $N$ is the block length of the code. 
\end{rethm}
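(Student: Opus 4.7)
I will build the code by concatenating an outer list-recoverable code with an inner code over a $\poly(1/\epsilon)$-size alphabet that is list-decodable from a $(1-\epsilon/2)$ fraction of deletions. The outer code $C_{\mathrm{out}}$ is an explicit list-recoverable code (e.g., folded Reed-Solomon) over a large alphabet $\Sigma_0$, of length $n$ and rate $R_{\mathrm{out}} = \Omega(\epsilon)$, list-recoverable from a $(1-\epsilon/2)$ fraction of errors with input list size $L = \poly(1/\epsilon)$ and output list size $L' = \poly(1/\epsilon)$. The inner code $C_{\mathrm{in}} : \Sigma_0 \to \Sigma^{\ell}$ has rate $R_{\mathrm{in}} = \Omega(\epsilon)$, alphabet $|\Sigma| = \poly(1/\epsilon)$, length $\ell = \poly(1/\epsilon)$, and is list-decodable from a $(1-\epsilon/2)$ fraction of deletions with list size $L$. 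Existence of such an inner code follows from the random-coding arguments of Section~\ref{sec:exis}; since $\ell$ depends only on $\epsilon$, an explicit one can be found by exhaustive search in time $\exp(\poly(1/\epsilon))$, independent of $N$. The concatenated code then has rate $R_{\mathrm{in}} R_{\mathrm{out}} = \Omega(\epsilon^2)$, alphabet $\Sigma$, and block length $N = n\ell$.

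\textbf{Counting good blocks.} Suppose $(1-\epsilon)N$ symbols are deleted, and let $b_i$ denote the number of surviving symbols from the $i$-th inner block, so $\sum_i b_i \ge \epsilon n\ell$. Call block $i$ \emph{good} if $b_i \ge (\epsilon/2)\ell$. An averaging argument shows that at least $G = \Omega(\epsilon n)$ blocks are good regardless of adversarial strategy: even when the adversary pushes as many blocks as possible to lie just below the threshold $(\epsilon/2)\ell$, the remaining surviving mass forces $G \ge \epsilon n/(2-\epsilon)$. For each good block, applying the inner list decoder to the surviving symbols returns $\le L$ candidate outer symbols, one of which is the true $c_i$.

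\textbf{Decoding.} Since block boundaries in the received $y$ are unknown, I use dynamic programming. For each state $(i, j) \in [n] \times [|y|]$ representing ``the first $j$ symbols of $y$ have been parsed as the first $i$ inner blocks'', maintain a list of candidate partial outer messages. Transitions from $(i-1, j')$ to $(i, j)$ invoke the inner list decoder on $y[j'+1..j]$ to produce $\le L$ candidates for $c_i$. To prevent combinatorial blowup, at each DP level the candidate lists are pruned via outer list recovery: only partial outer messages that remain consistent with the accumulated per-position inner lists on enough positions are kept, and the output list size $L' = \poly(1/\epsilon)$ of the outer code caps each state at $\poly(1/\epsilon)$ survivors. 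After the DP, each candidate full message is verified by re-encoding and checking agreement with $y$ as a subsequence. The running time is dominated by $\poly(N)$ DP states with inner decoding and list recovery at each transition, giving $N^{\poly(1/\epsilon)}$.

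\textbf{Main obstacle.} The primary technical difficulty is controlling the DP's candidate lists so that (i) the true message is retained through pruning, and (ii) the total work stays polynomial in $N$. Both hinge on choosing an outer code whose list-recovery decoder has output list size $\poly(1/\epsilon)$ while tolerating a $(1-\epsilon/2)$ fraction of errors from input lists of size $L$; folded Reed-Solomon (with parameters tuned to the rate/error trade-off) is the natural candidate, but care is needed because the DP queries the outer decoder on lists that depend on the assumed block boundaries. A clean alternative, which I would fall back on if the online pruning is too delicate, is to separate the phases: enumerate per-(outer-position, interval) inner lists to build super-lists $V_i$ of size $\poly(N)$ covering all possible parsings, then make a single call to an outer list-recoverable decoder that handles $\poly(N)$-sized input lists, appealing again to folded Reed-Solomon with input-list-size trade-offs.
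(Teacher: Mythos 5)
There is a genuine gap: the theorem asks for \emph{unique} decoding from a $1-\epsilon$ fraction of deletions, but your construction and algorithm only deliver \emph{list} decoding. A list-recoverable outer code composed with a list-decodable inner code yields, at best, a short list of candidate messages; your final ``verification'' step (re-encode and check that $y$ is a subsequence) cannot disambiguate, because nothing in your construction rules out two distinct codewords both containing the received string as a subsequence. To output a single message you would need a combinatorial guarantee that no two codewords of the concatenated code share a common subsequence of length $\epsilon N$, and you never establish this. The paper gets unique decoding by a different device: each symbol of the concatenated codeword carries a small \emph{header} $i \bmod D$ (with $D = 8/\epsilon$) identifying its outer position modulo $D$. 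The decoder splits the received word into blocks at header changes; merging two surviving blocks with the same header forces the adversary to entirely delete the $D-1$ intervening inner codewords, which costs about $(D-1)m$ deletions per error created. This accounting shows the adversary can cause $s$ errors and $r$ erasures of outer Reed--Solomon symbols only with roughly $(2s+r)(1-\epsilon/2)m$ deletions, so $2s + r < n(1-\epsilon/2)$ and a standard RS errors-and-erasures decoder recovers the message uniquely. Your approach has no mechanism bounding errors versus deletions in this way.

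A secondary problem is the boundary-identification machinery itself. Your DP ranges over all $\poly(N)$ intervals of the received word, so the per-position candidate lists fed to the outer list-recovery decoder have size $\poly(N)$, not $\poly(1/\epsilon)$; folded Reed--Solomon list recovery with input lists of size $N^{c}$ and agreement fraction only $\Theta(\epsilon)$ forces the folding parameter $s = \Omega(\log N)$ and blows up the output list size and running time beyond $N^{\poly(1/\epsilon)}$. (The paper's list-decoding result in Section~\ref{sec:list} avoids this by using only $O(n/\delta)$ regularly spaced windows, each contributing a constant-size list, and even then it only claims list decoding.) The ``online pruning'' variant you describe is not justified either: you give no argument that the true message survives pruning at every DP level. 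If you want to salvage the plan, the fix is not more careful list recovery but an explicit block-boundary signal such as the paper's headers, after which unique inner decoding plus RS errors-and-erasures decoding suffices.
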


The above handles the case of very large fraction of deletions. At the other extreme, when the deletion fraction is small, the following result shows that we  achieve high rate (approaching one) even over the binary alphabet. 

\begin{rethm}[Theorem~\ref{thm:low-del}] Let $\epsilon>0$. There is an explicit binary code $C\subseteq\{0,1\}^N$ which is decodable from an $\epsilon$ fraction of deletions with rate $1-\tilde{O}(\sqrt{\epsilon})$ 
in time $N^{\poly(1/\epsilon)}$. 

Moreover, $C$ can be constructed and encoded in time $N^{\poly(1/\epsilon)}$. 
\end{rethm}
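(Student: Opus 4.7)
The plan is to build $C$ by code concatenation, combining an outer Reed--Solomon code over a large alphabet with a short binary inner code that corrects a few deletions per block. Let $b = \Theta(\eps^{-c}\log N)$ for a suitable constant $c$, take the outer code $C_{\mathrm{out}}$ to be a Reed--Solomon code of rate $1-\alpha$ over $\F_{2^b}$ (which can correct any $\alpha/2$ fraction of symbol errors), and let $C_{\mathrm{in}}\subseteq\{0,1\}^b$ be a binary code of length $b$ correcting up to $\Delta$ deletions. A volume / Gilbert--Varshamov-style counting argument shows that $C_{\mathrm{in}}$ exists with rate $1-O\!\left(\tfrac{\Delta}{b}\log\tfrac{b}{\Delta}\right)$; since $b=\poly(\log N, 1/\eps)$, such a $C_{\mathrm{in}}$ can be found by exhaustive search in $N^{\poly(1/\eps)}$ time, giving the claimed construction/encoding bound.

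To balance the two sources of loss, observe that an adversary who has $\eps N$ deletions to spend can push at most $\eps N/\Delta$ of the $N/b$ inner blocks beyond the inner code's correction radius, i.e.\ a fraction $\eps b/\Delta$ of all blocks. Setting $\alpha=\Theta(\eps b/\Delta)$ lets $C_{\mathrm{out}}$ correct the resulting symbol errors, yielding total rate loss
\[
  O\!\left(\frac{\eps b}{\Delta}\right)+O\!\left(\frac{\Delta}{b}\log\frac{b}{\Delta}\right).
\]
Choosing $\Delta=\Theta(\sqrt{\eps}\,b)$ equalizes the two terms and produces the target rate $1-\tilde{O}(\sqrt{\eps})$.

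The main obstacle is synchronization: after deletions, the decoder sees only a subsequence, and it has no a priori knowledge of where the inner blocks begin and end. My plan is to interleave short buffer strings (say, long runs of a fixed symbol, or a carefully designed synchronization pattern) between consecutive inner codewords, while imposing a mild run-length or substring-avoidance constraint on inner codewords so that the buffer pattern remains unambiguous within any codeword. Because the adversary has only $\eps N$ deletions, at most $O(\eps N / L)$ buffers (of length $L$) can be destroyed; each lost buffer corrupts at most the two adjacent blocks, and these additional symbol errors are absorbed by $C_{\mathrm{out}}$. The buffer length $L$ must be large enough to survive concentrated deletions yet small enough to respect the $\tilde{O}(\sqrt{\eps})$ rate budget, and this tension is the source of the polylogarithmic factor hidden in the $\tilde{O}$ notation; getting this trade-off right is the delicate point of the proof.

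Decoding then proceeds in three stages: first, scan the received word to locate surviving buffer patterns and so partition it into putative block contents; second, brute-force decode each length-$b$ inner block in $2^{O(b)}=N^{\poly(1/\eps)}$ time by checking each inner codeword for subsequence containment, flagging blocks that fail or are ambiguous; third, run the polynomial-time Reed--Solomon decoder on the recovered sequence of inner messages, treating flagged positions as erasures and unflagged discrepancies as errors within the outer code's correction radius. The overall cost is dominated by the inner decoding step, giving a total runtime of $N^{\poly(1/\eps)}$ as claimed.
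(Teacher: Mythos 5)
Your overall architecture is the same as the paper's: an outer Reed--Solomon code concatenated with a brute-force-searchable binary inner deletion code, runs of zeros as buffers between inner blocks, a density constraint on inner codewords so that buffers cannot be counterfeited cheaply, and the $\Delta=\Theta(\sqrt{\eps}\,b)$ balancing that yields rate $1-\tilde{O}(\sqrt{\eps})$. (One small misattribution: the polylog factor comes from the inner code's entropy loss $h(\delta)=O(\sqrt{\eps}\log(1/\eps))$, exactly as in your own formula $O((\Delta/b)\log(b/\Delta))$, not from the buffer-length tension; the buffers cost only a clean $O(\sqrt{\eps})$ factor.)

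There is, however, a genuine gap in your decoding step: you run the Reed--Solomon decoder ``on the recovered sequence of inner messages,'' which implicitly assigns the $j$th decoding window to the $j$th outer coordinate. But the number of windows is not preserved under deletions --- the adversary can destroy a buffer with $O(\sqrt{\eps}\,m)$ deletions, or manufacture a spurious one inside a codeword by deleting its $1$'s --- and a \emph{single} such event shifts the positional correspondence of every subsequent window by one, turning $\Omega(n)$ outer symbols into errors and defeating a decoder that only tolerates an $O(\sqrt{\eps})$ fraction of them. Your claim that ``each lost buffer corrupts at most the two adjacent blocks'' is therefore false under positional alignment. The paper's fix is to encode the pair $(i,c_i)$ rather than $c_i$ in the $i$th inner block, so that each successfully decoded window announces its own outer index (windows with conflicting indices are discarded, and missing indices become erasures); taking the index from a field of size $q^{1/\eps}$ makes the rate cost of this labeling a negligible factor $h/(h+1)$ with $h=1/\eps$. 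You need this (or an equivalent resynchronization mechanism) for the outer decoding to go through; with it, the rest of your accounting --- including the per-window failure analysis that must also charge for codeword bits absorbed into adjacent buffers, which you gesture at but do not carry out --- matches the paper's.
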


\medskip
The next question is motivated by constructing binary codes for the
``high noise" regime. In this case, we do not know (even
non-constructively) the minimum fraction of deletions that forces the rate of the code to approach zero. (Contrast this with the situation for erasures
(resp. errors), where we know the zero-rate threshold to be an erasure fraction
$1/2$ (resp. error fraction $1/4$).) Clearly, if the adversary can
delete half of the bits, he can always ensure that the decoder
receives $0^{n/2}$ or $1^{n/2}$, so at most two strings can be
communicated. Surprisingly, in the model of {\em list decoding}, where
the decoder is allowed to output a small list consisting of all
codewords which contain the received string as a subsequence, one can
in fact decode from an deletion fraction arbitrarily close to $1/2$,
as our third construction shows:

\begin{rethm}[Theorem~\ref{thm:list}] Let $0<\epsilon<1/2$. There is an explicit binary code $C\subseteq\{0,1\}^N$ of rate $\tilde{\Omega}(\epsilon^3)$ which is list-decodable from a $1/2-\epsilon$ fraction of deletions with list size $(1/\eps)^{O(\log\log(1/\eps))}$. 

This code can be constructed, encoded, and list-decoded in time $N^{\poly(1/\epsilon)}$. 
\end{rethm}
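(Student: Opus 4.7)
I would use code concatenation. Let $\epsilon' = \Theta(\epsilon)$ and take $C_{\mathrm{out}} \subseteq \Sigma^n$ to be the code from Theorem~\ref{thm:const-adv}, with alphabet size $q = \poly(1/\epsilon')$, rate $\Omega(\epsilon'^2)$, uniquely decodable from a $(1-\epsilon')$-fraction of deletions. Pick an inner map $\phi\colon \Sigma \to \{0,1\}^m$ with $m = O(\log q / \epsilon)$ such that for every binary string $w$ with $\abs{w} \geq (1/2+\epsilon/2)m$, the set $\{\sigma \in \Sigma : w \text{ is a subsequence of } \phi(\sigma)\}$ has size at most $L_0 = (1/\epsilon)^{O(\log\log(1/\epsilon))}$. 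A random $\phi$ satisfies this with high probability, and since $mq = \poly(1/\epsilon)$ an explicit $\phi$ can be found by brute-force search in $2^{O(mq)} = N^{\poly(1/\epsilon)}$ time. The concatenated code $C \subseteq \{0,1\}^N$ has rate $\tilde\Omega(\epsilon^3)$.

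Given $y \in \{0,1\}^{(1/2+\epsilon)N}$, the decoder ideally segments $y$ into substrings $y_1,\ldots,y_n$ corresponding to the outer blocks, list-decodes each $y_i$ via $\phi$ to at most $L_0$ candidate outer symbols, and invokes the outer deletion-decoder. Since the true segmentation depends on the deletion pattern, I would instead iterate over a polynomial-sized family of approximate segmentations via a left-to-right dynamic program, where the endpoint of each outer block in $y$ is guessed on a coarse grid of spacing $\Theta(\epsilon' m)$. A simple averaging argument shows that, for any $c \in C$ consistent with $y$, at least an $\epsilon'$-fraction of outer blocks of $c$ retain $\geq (1/2+\epsilon/2)m$ surviving bits in $y$ (since the adversary's total deletion budget of $(1/2-\epsilon)N$ cannot sustain too large a fraction of heavily-deleted blocks), and the true endpoints of these blocks agree with some grid point to within $O(\epsilon' m)$. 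Inner list-decoding on the corresponding windows then produces at most $L_0$ candidate symbols per aligned outer position, and the outer decoder, tolerating a $(1-\epsilon')$-fraction of deletions, recovers $c$ by treating misaligned or under-sampled outer positions as deletions.

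The main obstacle is the alignment problem: deletions ignore inner block boundaries, so a priori there are exponentially many segmentations of $y$ into $n$ contiguous pieces, and bits from adjacent outer blocks can be conflated in $y$. The technical core is to show that the polynomial-sized, structured family of approximate segmentations above captures every codeword in the output list, and that the total list --- at most $L_0$ inner candidates per outer position times a bounded number of alignment choices per position --- stays within $(1/\epsilon)^{O(\log\log(1/\epsilon))}$ after tuning $\epsilon'$ and the grid resolution. A secondary subtlety is achieving the sub-polynomial $\log\log$ factor in the inner list size $L_0$: the random-coding analysis of $\phi$ must be sharpened carefully, since a crude union bound over windows $w$ and symbols $\sigma$ only yields a $\poly(1/\epsilon)$ bound, insufficient for the target list size.
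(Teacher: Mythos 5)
There is a genuine gap, and it sits exactly where your proposal is vaguest: how the per-window candidate lists are combined into an outer decoding. Your outer code (Theorem~\ref{thm:const-adv}) is \emph{uniquely} decodable from deletions, but what the inner stage hands it is not a deletion pattern. After inner list decoding you have, for each of $\Theta(n/\delta)$ windows, up to $L_0$ candidate pairs $(i,c_i)$, and only an $\epsilon$-fraction of outer positions are guaranteed to have their correct symbol appear in \emph{some} window's list; the remaining positions may carry only wrong candidates, which act as errors (in fact as adversarially chosen lists), not as deletions. Resolving this is a \emph{list-recovery} problem with agreement fraction $\alpha=\epsilon$ and input list size $\ell=\poly(1/\epsilon)$, and no unique decoder can do it: within even the correct segmentation there are $L_0$ choices per position (naively $L_0^n$ combinations), and the output is necessarily a list. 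This is precisely why the paper takes the outer code to be a Parvaresh--Vardy code and invokes the Guruswami--Rudra list-recovery algorithm (Theorem~\ref{pv-list}) with $s=O(\log(1/\epsilon))$, which tolerates agreement $\alpha=\epsilon$ at outer rate $\tilde\Omega(\epsilon)$; a Reed--Solomon outer code (or your deletion code, whose final stage is unique RS decoding) would be blocked by the Johnson-type constraint $\alpha>\sqrt{R\ell}$, forcing a much worse rate or failing outright. Relatedly, the alignment problem you flag as the ``technical core'' is a non-issue in the paper's scheme: one simply list-decodes \emph{every} window of length $(1/2+\delta)m$ on a $\delta m$-spaced grid and dumps all outputs into one collection $\mathcal{L}$ for the list-recovery step; no consistent segmentation or dynamic program is needed, because list recovery tolerates the junk from misaligned windows.

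Your ``secondary subtlety'' also reveals a misunderstanding of where the $\log\log$ comes from. Since $\log\log(1/\epsilon)\to\infty$, the target list size $(1/\epsilon)^{O(\log\log(1/\epsilon))}$ is \emph{larger} than $\poly(1/\epsilon)$, so a $\poly(1/\epsilon)$ bound on the inner list would be more than sufficient --- there is nothing to sharpen. In the paper the inner code is just the random/greedy code of Theorem~\ref{binary-list} with list size $O(1/\epsilon^2)$; the $(1/\epsilon)^{O(\log\log(1/\epsilon))}$ in the final statement is the $(rs)^s$ factor in the Parvaresh--Vardy output list with $s=O(\log(1/\epsilon))$. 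So the unusual list size is a consequence of the outer list recovery you are missing, not a property you need to engineer into the inner code.
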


We should note that it is not known if list decoding is required to correct deletion fractions close to $1/2$, or if one can get by with unique decoding. Our guess would be that the largest deletion fraction unique decodable with binary codes is (noticeably) bounded away from $1/2$. The cubic dependence on $\eps$ in the rate in the above theorem is similar to what has been achieved for correcting $1/2-\eps$ fraction of errors~\cite{GR-FRS}. We anticipate (but have not formally checked) that a similar result holds over any fixed alphabet size $k$ for list decoding from a fraction $(1-1/k-\eps)$ of symbol deletions.

\medskip\noindent {\bf Construction approach.}  Our codes, like many
considered in the past, including those of~\cite{CMNN,DM,Ratzer} in
the random setting and particularly~\cite{SZ} in the adversarial
setting, are based on concatenating a good error-correcting code (in
our case, Reed-Solomon or Parvaresh-Vardy codes) with an inner
deletion code over a much smaller block length. This smaller block
length allows us to find and decode the inner code using brute
force. The core of the analysis lies in showing that the adversary can
only affect the decoding of a bounded fraction of blocks of the inner
code, allowing the outer code to decode using the remaining blocks.

While our proofs only rely on elementary combinatorial arguments, some
care is needed to execute them without losing in rate (in the case of
Theorem~\ref{thm:low-del}) or in the deletion fraction we can handle
(in the case of Theorems \ref{thm:const-adv} and \ref{thm:list}).  In
particular, for handling close to fraction $1$ of deletions, we have
to carefully account for errors and erasures of outer Reed-Solomon
symbols caused by the inner decoder. To get binary codes of rate
approaching $1$, we separate inner codeword blocks with (not too long)
buffers of $0$'s and we exploit some additional structural properties
of inner codewords that necessitate many deletions to make them
resemble buffers. The difficulty in both these results is unique
identification of enough inner codeword boundaries so that the
Reed-Solomon decoder will find the correct message. The list decoding
result is easier to establish, as we can try many different boundaries
and use a ``list recovery" algorithm for the outer algebraic code. To
optimize the rate, we use the Parvaresh-Vardy codes~\cite{PV-focs05}
as the outer algebraic code.

\subsection{Organization}

In Section~\ref{sec:exis}, we consider the performance of certain random and greedily constructed codes. These serve both as benchmarks and as starting points for our efficient constructions. In Section~\ref{sec:const-adv}, we construct codes in the high deletion regime. In Section~\ref{sec:low-del}, we give high-rate binary codes which can correct a small constant fraction of deletions. In Section~\ref{sec:list}, we give list-decodable binary codes up to the optimal error fraction. Some open problems appear in Section~\ref{sec:open}. Omitted proofs appear in the appendices.

\section{Existential bounds for deletion codes}
\label{sec:exis}

 A quick recap
of standard coding terminology: a code $C$ of block length $n$ over an
alphabet $\Sigma$ is a subset $C \subseteq \Sigma^n$. The rate of $C$
is defined as $\frac{\log |C|}{n \log |\Sigma|}$. The encoding
function of a code is a map $E : [|C|] \to \Sigma^n$ whose image
equals $C$ (with messages identified with $[|C|]$ in some canonical
way). Our constructions all exploit the simple but powerful idea of
code concatenation: If $C_{\mathrm{out}} \subseteq
\Sigma_{\mathrm{out}}^n$ is an ``outer'' code with encoding function
$E_{\mathrm{out}}$, and $C_{\mathrm{in}} \subseteq
\Sigma_{\mathrm{in}}^m$ is an ``inner'' code encoding function
$E_{\mathrm{in}}: \Sigma_{\mathrm{out}} \to \Sigma_{\mathrm{in}}^m$,
the the concatenated code $C_{\mathrm{out}} \circ C_{\mathrm{in}}
\subseteq \Sigma_{\mathrm{in}}^{nm}$ is a code whose encoding function
first applies $E_{\mathrm{out}}$ to the message, and then applies
$E_{\mathrm{in}}$ to each symbol of the resulting outer codeword.

In this section, we show the existence of deletion codes in certain ranges of parameters, without the requirement of efficient encoding or decoding. The proofs (found in the appendix) follow from standard probabilistic arguments, but to the best of our knowledge, these bounds were not known previously. The codes of Theorem~\ref{low-del} will be used as inner codes in our final concatenated constructions. 

Throughout, we will write $[k]$ for the set $\{1,\dotsc, k\}$. 
We will also use the binary entropy function, defined for $\delta\in[0,1]$ as $h(\delta) = \delta \log\frac{1}{\delta} + (1-\delta) \log\frac{1}{1-\delta}$. All logarithms in the paper are to base $2$.

\medskip

We note that constructing a large code over $[k]^m$ which can correct from a $\delta$ fraction of deletions is equivalent to constructing a large set of strings such that for each pair, their longest common subsequence (LCS) has length less than $(1-\delta)m$. 
\medskip

We first consider how well a random code performs, using the following theorem from \cite{KLM}, which upper bounds the probability that a pair of randomly chosen strings has a long LCS. 

\begin{thm}[\cite{KLM}, Theorem 1] For every $\gamma>0$, there exists $c>0$ such that if $k$ and $m/\sqrt{k}$ are sufficiently large, and $u,v$ are chosen independently and uniformly from $[k]^m$, then 
\[\Pr\left[\bigl|\mathrm{LCS}(u,v) - 2m/\sqrt{k}\bigr|\geq \frac{\gamma m}{\sqrt{k}}\right]\leq e^{-cm/\sqrt{k}}.  
\]
\end{thm}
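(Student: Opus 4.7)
The plan is to split the statement into two parts: an estimate $\mathbb{E}[\mathrm{LCS}(u,v)] = (2+o(1))m/\sqrt{k}$ in the stated regime, and concentration around that mean at scale $\gamma m/\sqrt{k}$. For the mean, I would view $\mathrm{LCS}(u,v)$ as the length of a longest chain under coordinate-wise order in the random match set $M = \{(i,j)\in[m]^2 : u_i = v_j\}$. Each pair is a match with probability $1/k$, so $\mathbb{E}|M| = m^2/k$ and $|M|$ concentrates tightly by Chernoff; conditional on $|M|$, the matches look essentially like a uniformly random subset of $[m]^2$ of that size, and $\mathrm{LCS}(u,v)$ reduces to the longest increasing chain in this planar point configuration. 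Invoking the Vershik--Kerov/Logan--Shepp asymptotics---the longest increasing subsequence in a uniformly random permutation of $N$ elements has length $(2+o(1))\sqrt{N}$---yields $\mathbb{E}[\mathrm{LCS}(u,v)] = (2+o(1))\sqrt{m^2/k} = (2+o(1))m/\sqrt{k}$. A crude upper bound of $em/\sqrt{k}$ drops out of the simple first-moment computation $\mathbb{E}\,\#\{\text{common subseq.\ of length }\ell\} \leq \binom{m}{\ell}^2 k^{-\ell}$; sharpening the constant from $e$ to the true value $2$ is the delicate step and needs the reduction above.

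For concentration, the naive Azuma--Hoeffding estimate on the Doob martingale revealing $u_1,v_1,u_2,v_2,\ldots$ gives failure probability only $\exp(-\Omega(\gamma^2 m/k))$ at scale $\gamma m/\sqrt{k}$, which is too weak by a factor of $\sqrt{k}$ in the exponent when $k$ is large. The better tool is Talagrand's convex-distance inequality for \emph{certifiable} functions: any common subsequence of length $\ell$ is certified by its $2\ell$ positions in $u$ and $v$, so $\mathrm{LCS}$ is $2$-certifiable and $1$-Lipschitz. This yields sub-Gaussian tails at scale $\sqrt{M}$, where $M$ denotes the median of $\mathrm{LCS}$, so choosing a deviation of order $\gamma M \asymp \gamma m/\sqrt{k}$ produces failure probability $\exp(-\Omega(\gamma^2 M)) = \exp(-\Omega(\gamma^2 m/\sqrt{k}))$, matching the bound claimed in the theorem.

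The main technical obstacle is establishing the sharp constant $2$ in the mean estimate. The elementary first-moment bound only reaches $em/\sqrt{k}$, and a direct greedy construction of a long common subsequence also falls short. Obtaining the true asymptotic $2m/\sqrt{k}$ hinges on the reduction to (or a careful adaptation of) the longest-increasing-subsequence problem and its deep asymptotics---this is the content of the Kiwi--Loebl--Matou\v{s}ek analysis, and it is where the genuine difficulty of the theorem lies, while the concentration half is comparatively routine given the right inequality.
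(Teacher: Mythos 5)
This statement is not proved in the paper at all: it is Theorem~1 of the cited work of Kiwi, Loebl, and Matou\v{s}ek \cite{KLM}, imported as a black box to estimate the rate of a random deletion code. So there is no in-paper argument to compare against; what you have written is a sketch of the external result itself.

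As a sketch, your outline does match the known proof architecture: the sharp constant $2$ comes from a comparison with the longest increasing subsequence problem and the Vershik--Kerov/Logan--Shepp asymptotics, the crude constant $e$ from the first-moment count $\binom{m}{\ell}^2k^{-\ell}$, and the concentration at scale $m/\sqrt{k}$ (rather than the $\sqrt{m}$ scale that Azuma gives) from Talagrand's inequality applied to the $1$-Lipschitz, certifiable function $\mathrm{LCS}$ --- all of this is consistent with how \cite{KLM} proceeds. The genuine gap is the step you pass over with ``conditional on $|M|$, the matches look essentially like a uniformly random subset of $[m]^2$.'' That claim is false as stated: two matches $(i,j)$ and $(i,j')$ in the same row force $v_j=v_{j'}$, so the match set carries strong letter-induced correlations (it is a union of combinatorial rectangles $A_a\times B_a$ over letters $a$, not a Bernoulli point process), and controlling the effect of these dependencies on the longest chain is precisely the content of the Kiwi--Loebl--Matou\v{s}ek analysis. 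A secondary, smaller omission: even granting a uniform point set, one must pass from longest increasing chains in a planar point configuration with repeated rows/columns to the permutation LIS asymptotics, which requires a Poissonization or monotone-coupling step. You are candid that this is where the difficulty lies, but as written the mean estimate is asserted rather than proved, so the proposal is an accurate roadmap rather than a proof.
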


Fixing $\gamma$ to be $1$, we obtain the following. 

\begin{prop} Let $\epsilon>0$ be sufficiently small and let $k=(4/\epsilon)^2$. There exists a code $C\subseteq[k]^m$ of rate $R=O\bigl(\epsilon/\log(1/\epsilon)\bigr)$ which can correct a $1-\epsilon=1-4/\sqrt{k}$ fraction of deletions. 
\end{prop}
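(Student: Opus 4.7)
The plan is to combine the KLM tail bound (with the choice $\gamma=1$) with a standard probabilistic/expurgation argument to produce a code whose pairwise longest common subsequences are all short enough to correct a $1-\eps$ fraction of deletions.

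First, I would recall that correcting a $1-\eps$ fraction of deletions is equivalent to the pairwise condition $\mathrm{LCS}(u,v) < \eps m$ for all distinct $u,v \in C$. Since $k = (4/\eps)^2$, we have $\eps m = 4m/\sqrt{k}$, so it suffices to find a large set of strings with pairwise LCS strictly less than $4m/\sqrt{k}$. Applying the cited KLM bound with $\gamma = 1$ tells me that for independent uniform $u,v \in [k]^m$,
\[
\Pr\!\left[\mathrm{LCS}(u,v) \geq 3m/\sqrt{k}\right] \;\leq\; e^{-cm/\sqrt{k}},
\]
which is already the pairwise LCS bound we need, with comfortable slack between $3m/\sqrt{k}$ and $4m/\sqrt{k}$.

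Next I would run a standard Gilbert--Varshamov-type expurgation. Choose $N$ strings $u_1,\dots,u_N$ independently and uniformly from $[k]^m$, and call a pair $(i,j)$ \emph{bad} if $\mathrm{LCS}(u_i,u_j) \geq 3m/\sqrt{k}$. By linearity of expectation and the KLM bound, the expected number of bad pairs is at most $\binom{N}{2}\, e^{-cm/\sqrt{k}}$. Fixing
\[
N \;=\; \bigl\lfloor e^{cm/(2\sqrt{k})}\bigr\rfloor,
\]
this expectation is less than $N/2$, so there is a sample in which at most $N/2$ pairs are bad. Deleting one endpoint from each bad pair leaves a subcode $C \subseteq [k]^m$ of size at least $N/2$ with every pairwise LCS less than $3m/\sqrt{k} < \eps m$, so $C$ corrects a $1-\eps$ fraction of deletions.

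Finally I would compute the rate. We have $\log |C| \geq cm/(2\sqrt{k}\ln 2) - 1 = \Omega(m/\sqrt{k}) = \Omega(\eps m)$, while the denominator is $m \log k = 2m \log(4/\eps)$, giving
\[
R \;=\; \frac{\log |C|}{m\log k} \;=\; \Omega\!\left(\frac{\eps}{\log(1/\eps)}\right),
\]
as claimed. There is no real obstacle here: the heavy lifting is done by the KLM tail estimate, and the only points requiring a bit of care are verifying that the KLM threshold $3m/\sqrt{k}$ fits strictly below the required $\eps m = 4m/\sqrt{k}$, and tracking the $\log k = 2\log(4/\eps)$ factor in the denominator of the rate to recover the $\log(1/\eps)$ loss in the final bound.
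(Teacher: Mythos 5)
Your argument is correct and is exactly the standard random-coding-plus-expurgation argument the paper intends (the paper states this proposition as an immediate consequence of the KLM bound with $\gamma=1$ and gives no further proof). The reduction to pairwise LCS less than $\eps m=4m/\sqrt{k}$, the choice $N=\lfloor e^{cm/(2\sqrt{k})}\rfloor$ with expurgation of bad pairs, and the rate computation $\Omega(m/\sqrt{k})/(m\log k)=\Omega(\eps/\log(1/\eps))$ all check out.
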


The following results, and in particular Corollary~\ref{high-del}, show that we can nearly match the performance of random codes using a simple greedy algorithm. 

We first bound the number of strings which can have a fixed string $s$ as a subsequence. 

\begin{lemma}\label{lem:count} Let $\delta\in(0,1/k)$, set $\ell=(1-\delta) m$, and let $s\in[k]^{\ell}$. The number of strings $s'\in[k]^m$ containing $s$ as a subsequence is at most 
\[\sum_{t=\ell}^m \binom{t-1}{\ell-1} k^{m-t}(k-1)^{t-\ell}\leq k^{m-\ell} \binom{m}{\ell}.\]

When $k=2$, we have the estimate 
\[\sum_{t=\ell}^m \binom{t-1}{\ell-1} 2^{m-t}\leq \delta m \binom{m}{\ell}.\]
\end{lemma}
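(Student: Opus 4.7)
The plan is to count strings $s' \in [k]^m$ that contain $s$ as a subsequence by canonicalizing their embeddings and then reorganizing the resulting sum. For each such $s'$, I would define the \emph{leftmost embedding} $\pi_1 < \pi_2 < \cdots < \pi_\ell$ greedily, setting $\pi_j$ to be the smallest index greater than $\pi_{j-1}$ (with $\pi_0 = 0$) at which $s'_{\pi_j} = s_j$. The defining property is that $s'_i \neq s_j$ for every $i \in (\pi_{j-1}, \pi_j)$, since otherwise $\pi_j$ would not be leftmost. Because every such $s'$ admits a unique leftmost embedding, I partition the strings by $t := \pi_\ell$.

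Given $t$, the number of corresponding $s'$ factors as the product of four independent counts: $\binom{t-1}{\ell-1}$ choices for $\pi_1 < \cdots < \pi_{\ell-1}$ inside $[t-1]$; the $\ell$ symbols at the embedded positions are forced to equal $s_1, \ldots, s_\ell$; each of the $t - \ell$ non-embedded positions in $[t]$ has $k-1$ allowable values (anything but the corresponding forcing symbol); and the $m - t$ trailing positions are entirely free, contributing $k^{m-t}$. Summing over $t$ produces exactly $\sum_{t=\ell}^m \binom{t-1}{\ell-1} k^{m-t}(k-1)^{t-\ell}$. The first upper bound $k^{m-\ell}\binom{m}{\ell}$ is then immediate from the trivial inequality $(k-1)^{t-\ell} \leq k^{t-\ell}$ together with the hockey-stick identity $\sum_{t=\ell}^m \binom{t-1}{\ell-1} = \binom{m}{\ell}$.

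The binary refinement requires a sharper handle, since the general bound introduces a spurious factor of $2^{\delta m}$. My plan is to reorganize the sum via a bijective double count. Read $\binom{t-1}{\ell-1}$ as the number of $\ell$-subsets $A \subseteq [m]$ with $\max A = t$, and $2^{m-t}$ as the number of subsets $B \subseteq \{t+1, \ldots, m\}$; then $\sum_{t=\ell}^m \binom{t-1}{\ell-1} 2^{m-t}$ counts pairs $(A, B)$ with $\max A < \min B$, which are in bijection with subsets $C = A \cup B \subseteq [m]$ of size $|C| \geq \ell$ via the rule ``$A$ consists of the $\ell$ smallest elements of $C$''. Hence the sum equals $\sum_{s=0}^{\delta m} \binom{m}{\ell+s}$. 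Since $\delta < 1/k = 1/2$ forces $\ell \geq m/2$, the summands are monotone decreasing in $s$ and each is bounded by $\binom{m}{\ell}$, yielding the stated estimate.

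The main obstacle is the binary refinement: the direct bound overshoots by an exponential factor, and some combinatorial reorganization is needed to recover the claimed polynomial bound. The rest — uniqueness of the leftmost embedding and the concluding hockey-stick step — is routine.
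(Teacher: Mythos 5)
Your proof of the general bound is correct and is essentially the paper's argument: canonicalize via the leftmost embedding, stratify by the last embedded position $t$, count $\binom{t-1}{\ell-1}(k-1)^{t-\ell}k^{m-t}$ strings per stratum, and finish with $(k-1)^{t-\ell}\le k^{t-\ell}$ and the hockey-stick identity. For the binary refinement, however, you take a genuinely different route. The paper observes that for $k=2$ the summand $a_t=\binom{t-1}{\ell-1}2^{m-t}$ is nondecreasing in $t$ (the ratio $a_{t+1}/a_t=\tfrac{t}{2(t-\ell+1)}$ is at least $1$ for $t\le 2\ell-2$, which covers the whole range since $\ell>m/2$), and then bounds the sum by the number of terms times the last term $\binom{m-1}{\ell-1}$. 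You instead prove the exact identity $\sum_{t=\ell}^m\binom{t-1}{\ell-1}2^{m-t}=\sum_{j=\ell}^m\binom{m}{j}$ via the $(A,B)\leftrightarrow C$ bijection, which is correct (and easily spot-checked, e.g.\ $m=4,\ell=2$ gives $4+4+3=6+4+1$), and then use that $\binom{m}{j}$ is decreasing for $j\ge\ell>m/2$. Your identity is arguably cleaner and more informative --- it shows that for $k=2$ the number of supersequences of a fixed $s\in\{0,1\}^\ell$ does not depend on $s$ --- whereas the paper's monotonicity argument is what generalizes verbatim to the condition $\ell>m/k$ stated for general $k$. One shared imprecision: both routes bound a sum of $\delta m+1$ terms by its largest term, so both literally yield $(\delta m+1)$ times a quantity slightly below $\binom{m}{\ell}$ rather than exactly $\delta m\binom{m}{\ell}$; this is the same slack the paper takes and is immaterial downstream, where the bound only enters through a $\log(\delta m)/m$ term in the rate. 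No genuine gaps.
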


\begin{thm} \label{low-del} Let $\delta,\gamma>0$. Then for every $m$, there exists a code $C\subseteq[k]^m$ of rate $R=1-\delta-\gamma$ such that:
\begin{itemize}
\item $C$ can be corrected from a $\delta$ fraction of worst-case deletions, 
provided $k\geq 2^{2h(\delta)/\gamma}$. 
\item $C$ can be found, encoded, and decoded in time $k^{O(m)}$. 
\end{itemize}

Moreover, when $k=2$, we may take $R=1-2h(\delta)-\log(\delta m)/m$. 
\end{thm}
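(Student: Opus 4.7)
The plan is a standard greedy construction combined with the counting bound in Lemma~\ref{lem:count}. First, observe that $C$ corrects a $\delta$-fraction of deletions iff no two distinct codewords $u,v \in [k]^m$ share a common subsequence of length $\ell = \lceil(1-\delta) m\rceil$; for if they did, the adversary could reduce both to that shared subsequence and the decoder could not distinguish them, and conversely if no such pair exists the received string uniquely identifies the codeword.

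Given this, I would bound the size of the ``confusion ball'' around any fixed $c \in [k]^m$, i.e., the number of $c' \in [k]^m$ sharing a common subsequence of length $\ell$ with $c$. The number of length-$\ell$ subsequences of $c$ is at most $\binom{m}{\ell}$, and by Lemma~\ref{lem:count}, each fixed length-$\ell$ string $s$ is contained as a subsequence of at most $k^{m-\ell}\binom{m}{\ell}$ strings in $[k]^m$. Hence the confusion ball has size at most
\[
B \;\le\; \binom{m}{\ell}^2 k^{\delta m} \;\le\; 2^{2 h(\delta) m}\, k^{\delta m},
\]
using the standard estimate $\binom{m}{\delta m}\le 2^{h(\delta)m}$. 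Now construct $C$ greedily: start with $C=\emptyset$, and while there remains a string in $[k]^m$ outside the union of confusion balls of the current codewords, add such a string. This terminates with $|C|\ge k^m/B \ge k^{(1-\delta)m}\cdot 2^{-2h(\delta)m}$, giving rate
\[
R \;\ge\; 1 - \delta - \frac{2 h(\delta)}{\log k},
\]
which is at least $1-\delta-\gamma$ provided $\log k \ge 2h(\delta)/\gamma$, as required. For the binary ($k=2$) case, I substitute the sharper bound from Lemma~\ref{lem:count}, namely that each fixed $s$ of length $\ell$ is a subsequence of at most $\delta m\binom{m}{\ell}$ binary strings of length $m$. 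The confusion ball then has size at most $\delta m\cdot 2^{2h(\delta)m}$, yielding $|C|\ge 2^{(1-2h(\delta))m}/(\delta m)$ and the claimed rate $R \ge 1-2h(\delta)-\log(\delta m)/m$.

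For the algorithmic part, the greedy construction can be implemented by iterating over all $k^m$ strings in lexicographic order and, for each candidate, checking against all previously accepted codewords whether the LCS has length $\geq\ell$; computing LCS takes $\poly(m)$ time and $|C|\le k^m$, so the total construction time is $k^{O(m)}$. Encoding is a table lookup after fixing an arbitrary bijection with $[|C|]$. For decoding, given a received word $y$, brute force over all codewords $c\in C$ and check (by a standard dynamic program) whether $y$ is a subsequence of $c$; uniqueness of the answer is guaranteed by the construction, and the total time is again $k^{O(m)}$.

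The only mildly delicate point is the tighter binary bound, which relies on using the $\delta m\binom{m}{\ell}$ estimate rather than the generic $k^{m-\ell}\binom{m}{\ell}$ one; everything else is a routine union bound followed by a greedy argument, so I do not anticipate any substantive obstacle.
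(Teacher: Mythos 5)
Your proposal is correct and follows essentially the same route as the paper: the same greedy construction, the same confusion-ball bound $\binom{m}{\ell}^2 k^{\delta m}$ obtained from Lemma~\ref{lem:count}, the same derivation of the alphabet-size condition $k\ge 2^{2h(\delta)/\gamma}$, and the same use of the sharper binary estimate for the $k=2$ case. No gaps.
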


\begin{rmk} The authors of~\cite{KMTU} show a similar result for the binary case, but use the weaker bound in Lemma~\ref{lem:count} to get a rate of $1-\delta - 2h(\delta)$. 
\end{rmk}
\medskip

With a slight modification to the proof of Theorem~\ref{low-del}, we obtain the following construction, which will be used in Section~\ref{sec:low-del}. 

\begin{prop}\label{thm:balanced} Let $\delta,\beta\in(0,1)$. Then for every $m$, there exists a code $C\subseteq\{0,1\}^m$ of rate $R=1-2h(\delta) - O(\log (\delta m)/m) - 2^{-\Omega(\beta m)}/m$ such that:
\begin{itemize}
\item For every string $s\in C$, $s$ is ``$\beta$-dense'': every interval of length $\beta m$ in $s$ contains at least $\beta m/10 $ $1$'s, 
\item $C$ can be corrected from a $\delta$ fraction of worst-case deletions, and
\item $C$ can be found, encoded, and decoded in time $2^{O(m)}$. 
\end{itemize}
\end{prop}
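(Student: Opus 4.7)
The plan is to adapt the greedy proof of Theorem~\ref{low-del} by running it inside the subset $D \subseteq \{0,1\}^m$ of $\beta$-dense strings rather than all of $\{0,1\}^m$. The key point is that $\beta$-density fails only with probability exponentially small in $\beta m$, so the greedy counting bound is only mildly perturbed.

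First I would lower-bound $|D|$. For a fixed interval of length $\beta m$, the number of binary patterns on that interval with fewer than $\beta m/10$ ones is at most $2^{h(1/10)\beta m}$ by the standard binomial-tail estimate; the bits outside the interval contribute an additional factor of $2^{m - \beta m}$. A union bound over the at most $m$ interval positions gives $|\{0,1\}^m \setminus D| \leq m \cdot 2^{m - (1 - h(1/10))\beta m} = 2^m \cdot 2^{-\Omega(\beta m)}$, using $1 - h(1/10) > 0$, so $|D| \geq 2^m(1 - 2^{-\Omega(\beta m)})$. Next I would execute the greedy argument from Theorem~\ref{low-del} inside $D$: repeatedly pick any remaining $\beta$-dense string $c$, add it to $C$, and remove from $D$ every string $c'$ with $\mathrm{LCS}(c,c') \geq \ell := (1-\delta)m$. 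Any such $c'$ contains one of the at most $\binom{m}{\ell}$ distinct length-$\ell$ subsequences of $c$, and by the binary form of Lemma~\ref{lem:count} each fixed length-$\ell$ string is a subsequence of at most $\delta m \binom{m}{\ell}$ length-$m$ strings, so each step removes at most $\delta m \binom{m}{\ell}^2 \leq \delta m \cdot 2^{2h(\delta)m}$ strings from $D$. Thus $|C| \geq |D|/(\delta m \cdot 2^{2h(\delta)m}) \geq 2^{(1-2h(\delta))m}(1 - 2^{-\Omega(\beta m)})/(\delta m)$; taking logs, dividing by $m$, and using $-\log(1-x) = O(x)$ for small $x$ yields the claimed rate $1 - 2h(\delta) - O(\log(\delta m)/m) - 2^{-\Omega(\beta m)}/m$.

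All codewords are $\beta$-dense by construction, and any two codewords have $\mathrm{LCS} < (1-\delta)m$, so $\delta m$ deletions cannot map one codeword to another and the code corrects a $\delta$ fraction of worst-case deletions. Construction, encoding, and brute-force decoding all inherit the $2^{O(m)}$ time bound of Theorem~\ref{low-del}, since they amount to enumerating subsets of $\{0,1\}^m$ and pruning by LCS. I do not anticipate a substantive obstacle: the only task beyond Theorem~\ref{low-del} is verifying that imposing $\beta$-density costs only the claimed $O(2^{-\Omega(\beta m)}/m)$ slack in the rate, which is immediate from the binomial-tail computation above.
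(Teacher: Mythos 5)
Your proposal is correct and follows essentially the same route as the paper's proof: bound the number of non-$\beta$-dense strings by a union bound over the $\le m$ windows using the binomial-tail estimate $2^{h(1/10)\beta m}$, then rerun the greedy argument of Theorem~\ref{low-del} restricted to the dense strings, with the per-step removal count $\delta m\binom{m}{(1-\delta)m}^2$ from the binary case of Lemma~\ref{lem:count} yielding exactly the stated rate loss. No gaps.
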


In the high-deletion regime, we have the following corollary to Theorem~\ref{low-del}, obtained by setting $\delta=1-\epsilon$ and $\gamma = (1-\theta)\epsilon$, and noting that $h(\epsilon)\leq \epsilon \log(1/\epsilon) + 2\epsilon$ when $\epsilon<1/2$. 

\begin{cor} \label{high-del} Let $1/2>\epsilon>0$ and $\theta\in (0,1/3]$. There for every $m$, exists a code $C\subseteq[k]^m$ of rate $R=\epsilon\cdot \theta$ which can correct a $1-\epsilon$ fraction of deletions in time $k^{O(m)}$, provided $k\geq 64/\epsilon^{\frac{2}{1-\theta}}$. 
\end{cor}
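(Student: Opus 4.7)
The plan is a direct substitution into Theorem~\ref{low-del}, followed by a short calculation to simplify the alphabet-size requirement using the standard estimate for binary entropy.

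First I would set $\delta = 1 - \epsilon$ (so that the deletion fraction matches the $1-\epsilon$ asserted by the corollary) and $\gamma = (1-\theta)\epsilon$ in Theorem~\ref{low-del}. The resulting rate is
\[
R \;=\; 1 - \delta - \gamma \;=\; 1 - (1-\epsilon) - (1-\theta)\epsilon \;=\; \theta\epsilon,
\]
which is precisely the rate claimed. Theorem~\ref{low-del} also yields a construction, encoding, and decoding algorithm running in time $k^{O(m)}$, giving the time bound.

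Next I would verify the alphabet-size requirement. Theorem~\ref{low-del} requires $k \geq 2^{2h(\delta)/\gamma}$, and since $h(1-\epsilon) = h(\epsilon)$, this becomes $k \geq 2^{2h(\epsilon)/((1-\theta)\epsilon)}$. Using the hypothesis $\epsilon < 1/2$ together with the stated estimate $h(\epsilon) \leq \epsilon\log(1/\epsilon) + 2\epsilon$, the exponent satisfies
\[
\frac{2h(\epsilon)}{(1-\theta)\epsilon} \;\leq\; \frac{2\log(1/\epsilon) + 4}{1-\theta}.
\]
Therefore it suffices to take
\[
k \;\geq\; 2^{4/(1-\theta)} \cdot (1/\epsilon)^{2/(1-\theta)}.
\]
For $\theta \in (0,1/3]$ we have $1/(1-\theta) \leq 3/2$, so $2^{4/(1-\theta)} \leq 2^{6} = 64$, which gives the condition $k \geq 64/\epsilon^{2/(1-\theta)}$ in the corollary.

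There is no real obstacle here: the result is just a packaging of Theorem~\ref{low-del} in parameters convenient for the high-deletion regime. The only point requiring a small amount of care is tracking the exponent $2/(1-\theta)$ when collapsing $2^{2h(\epsilon)/((1-\theta)\epsilon)}$ into a clean polynomial in $1/\epsilon$, and verifying that for $\theta \leq 1/3$ the constant $2^{4/(1-\theta)}$ is bounded by $64$.
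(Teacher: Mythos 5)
Your proposal is correct and follows exactly the route the paper intends: substituting $\delta=1-\epsilon$ and $\gamma=(1-\theta)\epsilon$ into Theorem~\ref{low-del} and simplifying the alphabet bound via $h(\epsilon)\leq\epsilon\log(1/\epsilon)+2\epsilon$. The arithmetic, including the check that $2^{4/(1-\theta)}\leq 64$ for $\theta\leq 1/3$, is right.
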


\section{Coding against $1-\epsilon$ deletions}
\label{sec:const-adv}

In this section, we construct codes for the high-deletion regime. More precisely, we have the following theorem. 

\begin{thm} \label{thm:const-adv} Let $1/2>\epsilon>0$. There is an explicit code of rate $\Omega(\epsilon^2)$ 
and alphabet size $\poly(1/\epsilon)$ 
which can be corrected from a $1-\epsilon$ fraction of worst-case deletions. 

Moreover, this code can be constructed, encoded, and decoded in time $N^{\poly(1/\epsilon)}$, where $N$ is the block length of the code. 
\end{thm}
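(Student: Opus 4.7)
The plan is concatenation: let $C_{\text{in}}\subseteq[k]^m$ be the inner deletion code from Corollary~\ref{high-del} with $k=\poly(1/\eps)$, rate $R_{\text{in}}=\Omega(\eps)$, correcting a $1-\eps/2$ fraction of deletions, and let $C_{\text{out}}\subseteq[K]^n$ be a Reed--Solomon code of rate $R_{\text{out}}=c\eps$ for a small constant $c>0$ over an alphabet of size $K\geq n$. Take $m=\Theta(\log n/(\eps\log(1/\eps)))$, so that $|C_{\text{in}}|\geq n\cdot K$; this lets the inner message space encode a pair $(i,x)\in[n]\times[K]$ --- a block--index tag together with an outer symbol --- at asymptotically no cost to $R_{\text{in}}$, and makes brute--force inner decoding, which runs in time $k^{O(m)}$, fit within $N^{\poly(1/\eps)}$. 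The concatenated code then has rate $R_{\text{in}}\cdot R_{\text{out}}=\Omega(\eps^2)$ and alphabet $\poly(1/\eps)$.

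\textbf{Decoder.} Given the received string $r$ of length at least $\eps N$, for every substring $r[a..b]$ with $b-a\geq\eps m/2$ run brute--force inner decoding: by the inner code's property (any two distinct inner codewords have LCS $<\eps m/2$), at most one codeword has $r[a..b]$ as a subsequence, so each success yields a \emph{valid triple} $(a,b,(i,c))$. For each outer position $i\in[n]$, set the candidate at $i$ to $c$ if there is a consistent tag--$i$ triple; declare an erasure if no triple tags $i$, and an error if tag--$i$ triples disagree. Finally invoke a standard RS errors--and--erasures decoder on the resulting candidate vector.

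\textbf{Analysis and main obstacle.} Call a block $i$ \emph{good} if at most $(1-\eps/2)m$ of its symbols were deleted. Since the adversary deletes at most $(1-\eps)N$ symbols, the number of bad blocks is at most $\tfrac{1-\eps}{1-\eps/2}n$, so $\Omega(\eps n)$ blocks are good; each good block $i$ produces a correct tag--$i$ triple via its true interval in $r$, so $\Omega(\eps n)$ outer candidates are correct. The RS decoder recovers the message whenever $2E+S<(1-R_{\text{out}})n$, where $E$ is the number of errors and $S$ the number of erasures; since $E+S\leq n-\Omega(\eps n)$, the task reduces to bounding $E=O(\eps n)$. The main obstacle is therefore controlling \emph{spurious} valid triples: intervals $r[a..b]$ that span multiple true blocks but happen to be a subsequence of some $E_{\text{in}}(i,c')$ with $(i,c')\neq(j,x_j)$ for the true block $j$. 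To handle this, I would use the LCS--$(\eps m/2)$ property of the inner code to argue that the received characters from any single true block $j$ contribute fewer than $\eps m/2$ symbols to a subsequence of any $E_{\text{in}}(i,c')\neq E_{\text{in}}(j,x_j)$, so a spurious interval must stitch together contributions from at least two different blocks; a combinatorial accounting over consecutive blocks, together with the global constraint that only $(1-\eps)N$ symbols are deleted, then pins down the number of outer positions hit by spurious tag--$i$ triples to $O(\eps n)$. With $E,S$ bounded, RS decoding recovers the message, and the parameter choices yield rate $\Omega(\eps^2)$, alphabet $\poly(1/\eps)$, and the time bound $N^{\poly(1/\eps)}$.
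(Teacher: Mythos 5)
Your skeleton (RS outer code carrying index--symbol pairs, a Corollary~\ref{high-del} inner code correcting a $1-\eps/2$ deletion fraction, brute-force inner decoding, errors-and-erasures RS decoding) matches the paper's, but you have dropped the one device the paper's proof actually hinges on, and the step you defer --- bounding the spurious triples --- is precisely where the argument breaks. The pairwise LCS property of the inner code (distinct codewords share no common subsequence of length $\geq \eps m/2$) does \emph{not} prevent a mixed window from decoding spuriously: if $s_1$ is a subsequence of the survivors of block $j$ with $|s_1|<\eps m/2$ and $s_2$ is a subsequence of the survivors of block $j+1$ with $|s_2|<\eps m/2$, nothing stops $s_1s_2$ (of total length $\geq \eps m/2$) from being a subsequence of some third codeword $E_{\text{in}}(i,c')$. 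Your own observation shows only that a spurious window must straddle a boundary; it does not make the adversary \emph{pay} for it. Consider the attack in which the adversary keeps exactly $\eps m$ symbols in every block (using precisely the allowed $(1-\eps)N$ deletions): every one of the $n-1$ boundaries then has $\geq \eps m/2$ survivors on each side, and every mixed window there comes for free --- no deletions beyond those already charged to shortening the blocks. So the ``global constraint that only $(1-\eps)N$ symbols are deleted'' gives you no handle on the number of boundaries producing spurious triples, and in the worst case $\Omega(n)$ outer positions are hit by conflicts or wrong values, which defeats an RS code of rate $\Theta(\eps)$. Closing this would require a much stronger (and unproved) property of the inner code about mixed subsequences of pairs of codewords, not just pairwise LCS.

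The paper avoids this entirely by enlarging the alphabet to $\{0,\dots,D-1\}\times[k]$ with $D=8/\eps$ and stamping every symbol of the $i$th inner codeword with the header $i \bmod D$. The decoder then cuts the received word into \emph{disjoint} blocks at header changes (rather than examining all $O(N^2)$ windows), so a block can contain symbols from two inner codewords only if those codewords share a header, i.e.\ have indices congruent mod $D$ --- which forces the adversary to delete the $D-1$ intervening codewords in their entirety, at a cost of $(D-1)m$ deletions per merge. That charging scheme is what bounds the number of errors plus twice the erasures by $(1-\eps/2)n$ and lets the RS decoder succeed. If you want a header-free construction you need a genuinely new idea here; as written, the proposal has a real gap.
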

\medskip

We first define the code. Theorem~\ref{thm:const-adv} is then a direct corollary of Lemmas~\ref{thm:high-del-rate} and~\ref{thm:high-del-alg}. 
\medskip

\noindent {\bf The code}: Our code will be over the alphabet $\{0,1,\dots,D-1\} \times [k]$, where $D=8/\epsilon$ and $k=O(1/\epsilon^3)$. 
\smallskip

We first define a code $C'$ over the alphabet $[k]$ by concatenating a Reed-Solomon code with a deletion code constructed using Corollary~\ref{high-del}, setting $\theta=1/3$.  

More specifically, let $\F_q$ be a finite field. For any $n'\leq n\leq q$, the Reed-Solomon code of length $n\leq q$ and dimension $n'$ is a subset of $\F_q^n$ which admits an efficient algorithm to uniquely decode from $t$ errors and $r$ erasures, provided $r+2t<n-n'$ (see, for example,~\cite{WB}). 
\smallskip

In our construction, we will take
$n=q=2n'/\epsilon$. We first encode our message to a codeword
$c=(c_1,\dotsc, c_n)$ of the Reed-Solomon code. For each $i$, we then
encode the pair $(i, c_i)$ using Corollary~\ref{high-del} by a code
$C_1\colon [n]\times\F_q\to [k]^m$, where $m=12\log q/\epsilon$, which
can correct a $1-\epsilon/2$ fraction of deletions.

\medskip

To obtain our final code $C$, we replace every symbol $s$ in $C'$ which encodes the $i$th RS coordinate by the pair $\bigl(i\pmod{D}, s\bigr)\in \{0,1,\dots,D-1\} \times[k]$. The first coordinate, $i\pmod{D}$, contains the location of the codeword symbol modulo $D$, and we will refer to it as a {\bf header}.

\begin{lemma} 
\label{thm:high-del-rate} 
The rate of $C$ is $\Omega(\epsilon^2)$. 
\end{lemma}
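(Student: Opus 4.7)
The plan is to read off $R=\log|C|/(N\log|\Sigma|)$ directly from the parameters, by first computing the rate of the intermediate concatenated code $C'$ over $[k]$ and then accounting for the alphabet enlargement caused by attaching headers. The three quantities I need are: the message count $|C|=q^{n'}$ from the outer Reed--Solomon code; the block length $N=nm$, which is unchanged by the headers; and the final alphabet size $|\Sigma|=Dk$.

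First I would compute the rate of $C'$ over $[k]$ as a product of the outer and inner rates. The outer RS code has rate $n'/n=\epsilon/2$ from the choice $n=2n'/\epsilon$. The inner code is the one provided by Corollary~\ref{high-del} applied with deletion parameter $\epsilon/2$ and $\theta=1/3$, giving inner rate $(\epsilon/2)(1/3)=\epsilon/6$ over $[k]$ as long as $m$ is taken just large enough that the $k^{(\epsilon/6)m}$ inner codewords can index all pairs $(i,c_i)\in[n]\times\F_q$; explicitly, $m=\Theta(\log q/(\epsilon\log k))$ suffices. Composing, $C'$ has rate $(\epsilon/2)\cdot(\epsilon/6)=\Omega(\epsilon^2)$ over $[k]$.

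Second, I would handle the header. Attaching $i\bmod D$ to each $[k]$-symbol leaves $|C|$ and $N$ unchanged and merely enlarges the alphabet from $k$ to $Dk$, so the rate is multiplied by $\log k/\log(Dk)$. Because $D=8/\epsilon$ and $k=O(1/\epsilon^3)$ are both polynomial in $1/\epsilon$, $\log D$ and $\log k$ are of the same order, hence $\log k/\log(Dk)=\Theta(1)$ and the rate of $C$ remains $\Omega(\epsilon^2)$.

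The calculation is elementary, so I do not anticipate a genuine obstacle; the one point requiring care is that the inner code must encode the pair $(i,c_i)$ rather than just $c_i$, which roughly doubles the required value of $m$. This factor of two is absorbed into the $\Omega(\cdot)$ constant and is independent of $N=nm$, so the rate bound is a genuine absolute constant times $\epsilon^2$.
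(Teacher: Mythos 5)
Your proof is correct and follows essentially the same route as the paper's: the rate is computed as the product of the outer Reed--Solomon rate ($\Theta(\epsilon)$), the inner code's rate from Corollary~\ref{high-del} ($\Theta(\epsilon)$), and the $\log k/\log(Dk)=\Theta(1)$ loss from attaching headers. The only cosmetic difference is that you absorb the factor-of-two cost of encoding the index $i$ into the inner block length $m$, whereas the paper charges it to the ``labeled'' outer code's rate ($\epsilon/4$ instead of $\epsilon/2$); both bookkeepings give $\Omega(\epsilon^2)$.
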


\begin{proof}  
The rate of the outer Reed-Solomon code, labeled with indices, is at
least $\epsilon/4$. The rate of the inner code can be taken to be
$\Omega(\epsilon)$, by Corollary~\ref{high-del}. Finally, the alphabet
increase in transforming $C'$ to $C$ decreases the rate by a factor of
$\frac{\log(k)}{\log(D k)}=\Omega(1)$.

In particular, this gives us a final rate of $\Omega(\epsilon^2)$.
\end{proof}

\begin{lemma} \label{thm:high-del-alg} The code $C$ can be decoded from a $1-\epsilon$ fraction of worst-case deletions in time $N^{O(\poly1/\epsilon)}$. 
\end{lemma}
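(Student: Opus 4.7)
The plan is to parse the received word using the header coordinate, brute-force decode each resulting chunk against the inner deletion code, and feed the recovered (index, value) pairs into the standard Reed--Solomon errors-and-erasures decoder. Concretely, partition the received word $y$ into maximal runs $B_1,\ldots,B_L$ of symbols sharing a common header $a_j \in \{0,\ldots,D-1\}$. For each $B_j$, project onto the second coordinate and, by brute force, look for a unique pair $(i,v) \in [n] \times \F_q$ with $i \equiv a_j \pmod{D}$ such that $C_1(i,v)$ contains the projection as a subsequence with at most $(1-\epsilon/2)m$ deletions; if found, emit $(i,v)$. For each $i \in [n]$, set $\hat c_i := v$ if exactly one emitted pair carries first coordinate $i$, and otherwise declare $i$ an erasure. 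Finally, apply the Reed--Solomon errors-and-erasures decoder to $(\hat c_1,\ldots,\hat c_n)$; it succeeds whenever the number of erasures $r$ and errors $t$ satisfies $r + 2t < n - n'$.

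Classify each of the $n$ outer-code blocks as \emph{good} (at least $\epsilon m/2$ of its symbols survived), \emph{light}, or \emph{dead}. Since $\geq \epsilon nm$ symbols survive, a counting argument gives $G \geq n\epsilon/2$ good blocks. Classify a chunk as \emph{pure} if its symbols come from a single outer block, and \emph{mixed} otherwise. The key structural observation: if $B_j$ is mixed with contributing blocks $i_1 < \cdots < i_{s_j}$, they all share header $a_j$, hence are pairwise at distance $\geq D$ in the codeword, and every block strictly between two consecutive contributing blocks must be dead --- any surviving symbol in between carries a different header and would split the maximal run $B_j$. Hence $(D-1)\sum_{s_j\geq 2}(s_j-1) \leq T \leq n$, so both the number of mixed chunks and the number of good blocks they contain are at most $2n/(D-1)$. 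Taking $D$ a sufficiently large constant times $1/\epsilon$ makes both at most $\gamma n\epsilon$ for any $\gamma > 0$ of our choosing; in particular, at least $(1/2 - \gamma)n\epsilon$ good blocks sit alone inside pure chunks.

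Each lonely good block is decoded correctly: its pure chunk is exactly its surviving symbols, at most $(1-\epsilon/2)m$ were deleted, and the recovered index automatically satisfies the header check. Pure chunks around light or dead blocks are below the inner decoder's length threshold and emit nothing, so wrong emissions come only from mixed chunks, numbering at most $\gamma n\epsilon$. Writing $P$ for correctly filled final positions, $t$ for errors, and $E_2$ for conflict positions (a correct emission destroyed by a colliding wrong emission), we have $t + E_2 \leq \gamma n\epsilon$ and $P \geq (1/2 - \gamma) n\epsilon - E_2$, which combine to give $r + 2t = n - P + t \leq n - (1/2 - 2\gamma)n\epsilon$. Choosing $n'$ slightly below $n\epsilon/2$ (a minor tweak of the stated construction) and $\gamma$ small enough makes this strictly below $n - n'$, so the Reed--Solomon decoder returns the true message. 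The total rate remains $\Omega(\epsilon^2)$ as in Lemma~\ref{thm:high-del-rate}.

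The main obstacle is this conflict-aware bookkeeping: the adversary can trade a fully deleted block for a wrong Reed--Solomon emission that collides with and erases a correctly recovered symbol at a live index, so we must keep the gap between correct and wrong emissions comfortably above the Reed--Solomon decoding budget. The key leverage is that each such wrong emission costs the adversary at least $D-1$ fully dead blocks (the intermediate blocks of the corresponding mixed chunk), so taking $D = \Theta(1/\epsilon)$ with a sufficiently large hidden constant absorbs all the slack. The running time is dominated by the brute-force inner decoder, which takes $k^{O(m)} = N^{\poly(1/\epsilon)}$ per chunk, for a total of $N^{\poly(1/\epsilon)}$.
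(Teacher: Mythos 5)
Your algorithm and overall plan are the same as the paper's: parse the received word into maximal equal-header runs, brute-force decode each run against the inner code, discard conflicting indices, and finish with Reed--Solomon errors-and-erasures decoding. The difference is in the bookkeeping. The paper charges the adversary's \emph{deletion budget} directly: a merge of $t$ same-header inner codewords forces at least $(t-1)(D-1)m$ deletions (the fully erased intervening blocks), and this is weighed against the one error and the $r$ erasures it produces, yielding $2s+r < n(1-\epsilon/2)$ with the stated parameters $D=8/\epsilon$ and $n'=n\epsilon/2$. You instead bound the number of mixed chunks purely by counting dead blocks ($\le n$, not using the deletion budget here), getting at most $2n/(D-1)$ good blocks trapped in mixed chunks, and separately lower-bound the good blocks by the survival budget. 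This is cleaner and more modular, but it is genuinely lossier: with the paper's $D=8/\epsilon$ your $\gamma$ is about $1/4$, so your final bound $r+2t\le n-(1/2-2\gamma)n\epsilon$ degenerates to $r+2t\le n$, which is vacuous against the threshold $n-n'=n(1-\epsilon/2)$. So the parameter changes you flag (taking $D=c/\epsilon$ for a large constant $c$, and $n'$ strictly below $n\epsilon/2$) are not optional cosmetic tweaks --- your argument does not establish the lemma for the code as actually defined in the paper, only for a reparametrized version of it. Since the reparametrization preserves rate $\Omega(\epsilon^2)$ and alphabet size $\poly(1/\epsilon)$, Theorem~\ref{thm:const-adv} still follows from your proof; but if you want to prove the lemma for the stated code, you should tighten the merge accounting to charge $m$ deletions per dead intermediate block against the combined error/erasure count, as the paper does. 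The structural claims you make along the way (surviving symbols of one block are contiguous and form a pure run unless all $\ge D-1$ blocks to a same-header neighbor are dead; pure runs of good blocks decode correctly by unique decodability of the inner code; short pure runs emit nothing) are all correct.
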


\begin{proof}

Let $N$ be the block length of $C$. We apply the following algorithm to decode $C$. 

\begin{itemize}
\item[-] We partition the received word into {\em blocks} as follows: The first block begins at the first coordinate, and each subsequent block begins at the next coordinate whose header differs from its predecessor. This takes time $\poly(N)$. 

\item[-] We begin with an empty set $L$. 

For each block which is of length between $\epsilon m/2$ and $m$, we remove the headers by replacing each symbol $(a,b)$ with the second coordinate $b$. We then apply the decoder from Corollary~\ref{high-del} to the block. If this succeeds, outputting a pair $(i, r_i)$, then we add $(i, r_i)$ to $L$. This takes time $N^{\poly(1/\epsilon)}$. 

\item[-] If for any $i$, $L$ contains multiple pairs with first coordinate $i$, we remove all such pairs from $L$. $L$ thus contains at most one pair $(i, r_i)$ for each index $i$. We apply the Reed-Solomon decoding algorithm to the string $r$ whose $i$th coordinate is $r_i$ if $(i, r_i)\in L$ and erased otherwise. This takes time $\poly(N)$. 
\end{itemize}
\medskip

{\bf Analysis}: For any $i$, we will decode a correct coordinate $\bigl(i, c_i\bigr)$ if there is a block of length at least $\epsilon m/2$ which is a subsequence of $C_1(i, c_i)$. (Here and in what follows we abuse notation by disregarding headers on codeword symbols.)

Thus, the Reed-Solomon decoder will receive the correct value of the $i$th coordinate unless one of the following occurs: 
\begin{enumerate}
\item (Erasure) The adversary deletes a $\geq 1-\epsilon/2$ fraction of $C_1(i, c_i)$. 
\item (Merge) The block containing (part of) $C_1(i, c_i)$ also contains symbols from other codewords of $C_1$, because the adversary has erased the codewords separating $C_1(i, c_i)$ from its neighbors with the same header. 
\item (Conflict) Another block decodes to $(i, r)$ for some $r$. Note that an erasure cannot cause a coordinate to decode incorrectly, so a conflict can only occur from a merge. 
\end{enumerate}
\smallskip

We would now like to bound the number of errors and erasures the adversary can cause. 

\begin{itemize}
\item[-] If the adversary causes an erasure without causing a merge, this requires at least $(1-\epsilon/2) m$ deletions within the block which is erased, and no other block is affected. 

\item[-] If the adversary merges $t$ inner codewords with the same label, this requires at least $(t-1)(D-1)m$ deletions, of the intervening codewords with different labels. The merge causes the fully deleted inner codewords to be erased, and causes the $t$ merged codewords to resolve into at most one (possibly incorrect) value. This value, if incorrect, could also cause one conflict. 

In summary, in order to cause one error and $r\leq (t-1)D +2$ erasures, the adversary must introduce at least $(t-1)(D-1)m\geq (2+r) (1-\epsilon/2)m$ deletions. 
\end{itemize}

In particular, if the adversary causes $s$ errors and $r_1$ erasures by merging, and $r_2$ erasures without merging, this requires at least 
\[\geq (2s + r_1) (1-\epsilon/2)m + r_2(1-\epsilon/2)m = (2s+r)(1-\epsilon/2)m\]
deletions. Thus, when the adversary deletes at most a $(1-\epsilon)$ fraction of codeword symbols, we have that $2s+r$ is at most $(1-\epsilon)mn/(1-\epsilon/2)m<n(1-\epsilon/2)$. Recalling that the Reed-Solomon decoder in the final step will succeed as long as $2s+r<n(1-\epsilon/2)$, we conclude that the decoding algorithm will output the correct message. 
\end{proof}

\section{Binary codes against $\epsilon$ deletions}
\label{sec:low-del}

\subsection{Construction overview}

The goal in our constructions is to allow the decoder to approximately locate the boundaries between codewords of the inner code, in order to recover the symbols of the outer code. In the previous section, we were able to achieve this by augmenting the alphabet and letting each symbol encode some information about the block to which it belongs. In the binary case, we no longer have this luxury. 

The basic idea of our code is to insert long runs of zeros, or ``buffers,'' between adjacent inner codewords. The buffers are long enough that the adversary cannot destroy many of them. If we then choose the inner code to be dense (in the sense of Proposition~\ref{thm:balanced}), it is also difficult for a long interval in any codeword to be confused for a buffer. This approach optimizes that of~\cite{SZ}, which uses an inner code of rate $1/2$ and thus has final rate bounded away from $1$. 

The balance of buffer length and inner codeword density seems to make buffered codes unsuited for high deletion fractions, and indeed our results only hold as the deletion fraction goes to zero. 

\subsection{Our construction}

We now give the details of our construction. For simplicity, we will not optimize constants in the analysis. 

\begin{thm}\label{thm:low-del} Let $\epsilon>0$. There is an explicit binary code $C\subseteq\{0,1\}^N$ which is decodable from an $\epsilon$ fraction of deletions with rate $1-\tilde{O}(\sqrt{\epsilon})$ 
in time $N^{\poly(1/\epsilon)}$. 

Moreover, $C$ can be constructed and encoded in time $N^{\poly(1/\epsilon)}$. 
\end{thm}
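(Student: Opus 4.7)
I plan to concatenate an outer Reed--Solomon code with the $\beta$-dense binary inner code from Proposition~\ref{thm:balanced}, and to separate consecutive inner codeword blocks by ``buffers'' consisting of $b$ consecutive zeros. Concretely: take the outer code to be a Reed--Solomon code over $\F_q$ with $q=2^{\Theta(m)}$, length $n$, and rate $1-\tilde{O}(\sqrt{\epsilon})$; take the inner code $C_1\subseteq\{0,1\}^m$ to be a $\beta$-dense code of rate $1-2h(\delta)-o(1)$ correcting a $\delta$ fraction of deletions (Proposition~\ref{thm:balanced}); and between every two adjacent inner codeword blocks, pad with $b$ zeros. I set $\delta=\beta=\tilde{\Theta}(\sqrt{\epsilon})$, $b=\Theta(\beta m)$, and $m$ a sufficiently large polynomial in $1/\epsilon$ times $\log n$ so that the inner brute-force decoding takes $2^{O(m)}=N^{\poly(1/\epsilon)}$ time. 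With these parameters the final rate is the product of $1-\tilde O(\sqrt\epsilon)$ (outer), $1-\tilde O(\sqrt\epsilon)$ (inner), and $m/(m+b)=1-O(\beta)=1-\tilde O(\sqrt\epsilon)$ (buffer overhead), giving $1-\tilde O(\sqrt\epsilon)$ overall.

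The decoder first identifies ``candidate buffers'' by locating all maximal runs of zeros in the received string of length at least some threshold $\tau=\Theta(\beta m)$ (say $\tau=b/2$). These runs partition the received word into chunks, each of which is a candidate for a single partially-deleted inner codeword. For each chunk I brute-force decode by searching $C_1$ for a codeword $c$ such that the chunk is obtainable from $c$ by at most $\delta m$ deletions; if a unique such $c$ exists, its outer symbol is output, and otherwise the position is marked erased. Finally the Reed--Solomon decoder is applied to the resulting sequence of $\F_q$-symbols and erasures.

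Correctness reduces to a deletion-budget count. Call an inner block \emph{bad} if (a) the adversary deletes more than $\delta m$ bits from within it, (b) one of its two bounding buffers ends up with fewer than $\tau$ surviving zeros, or (c) a spurious run of $\geq\tau$ zeros appears inside it in the received word. Event (a) costs the adversary $\geq\delta m$ deletions, (b) costs $\geq b-\tau=\Theta(\beta m)$, and the crucial use of $\beta$-density is that (c) costs $\geq\tau/10=\Omega(\beta m)$: partitioning any length-$\tau$ window of the codeword into $\tau/(\beta m)$ subwindows of length $\beta m$ shows it contains at least $\tau/10$ ones, all of which must be deleted to yield a run of zeros in the received word. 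Since the adversary has only $\epsilon N=\epsilon n(m+b)$ deletions in total and each bad block costs $\Omega(\beta m)$, there are at most $O(\epsilon n/\beta)=\tilde O(\sqrt\epsilon)\cdot n$ bad blocks. Every good block's chunk exactly covers that block, so it is decoded correctly, and the outer decoder sees at most a $\tilde O(\sqrt\epsilon)$ fraction of errors plus erasures, which is within its correction radius for the chosen rate.

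The main obstacle, and the step requiring the most care, is verifying that the decoder's chunking really lines up with the true codeword boundaries on every good block: I must argue that a good block's two bounding buffers both appear as candidate buffers (so no spurious merge with a neighbor), that no spurious long run of zeros splits its interior (so the chunk covers exactly that block), and that brute-force decoding of its chunk returns the correct inner message rather than mis-decoding to some distinct codeword of $C_1$ within deletion distance $\delta m$. The three cost bounds above are exactly what is needed for these three conditions, and the density guarantee of Proposition~\ref{thm:balanced} is what lets us rule out false buffers without paying more than $\tilde O(\sqrt\epsilon)$ in rate.
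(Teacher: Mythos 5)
Your construction and counting argument are essentially the paper's: the same buffered concatenation with the $\beta$-dense inner code of Proposition~\ref{thm:balanced}, the same threshold-based buffer detection, and the same deletion-budget accounting showing that each ``bad'' block costs the adversary $\Omega(\beta m)$ deletions. However, there is one genuine gap: you never say how a decoded chunk gets matched to a \emph{position} of the outer Reed--Solomon codeword. Your decoder outputs a \emph{sequence} of $\F_q$-symbols and erasures and feeds it to the RS errors-and-erasures decoder, which implicitly assigns the $j$th chunk to the $j$th coordinate. This positional assignment is destroyed by a single merge or split: if the adversary kills one buffer near the beginning of the codeword (cost only $\Theta(\beta m)$ deletions), every subsequent chunk is shifted by one relative to the true block indices, so almost all coordinates handed to the RS decoder are wrong even though almost all chunks decode to the correct inner codeword. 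Your ``main obstacle'' paragraph addresses only the local version of alignment (that a good block's chunk covers exactly that block), not this global synchronization problem, and a standard RS errors-and-erasures decoder cannot recover from what is effectively a symbol deletion/insertion at the outer level.

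The paper's fix is to make each inner codeword self-locating: the $i$th outer symbol $c_i$ is encoded together with its index as the pair $(i,c_i)$ before applying the inner code, so a correctly decoded window reports which outer coordinate it carries regardless of how many chunks precede it. (To keep this labeling from hurting the rate, the outer code is taken over $\F_{q^h}$ with $h=1/\epsilon$ so the index costs only a $\frac{1}{h+1}$ fraction of the inner message.) This also forces an extra step you omit: two different windows may now claim the same index $i$ (e.g.\ a merged window mis-decoding to some $(i,r)$), and the paper discards all pairs with a repeated index, converting each such conflict into one extra erasure, which the budget argument still covers. With the index labels and the conflict-removal step added, your argument goes through along the same lines as the paper's.
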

\medskip

{\bf The code}: 
We again use a concatenated construction with a Reed-Solomon code as the outer code, choosing one which can correct a $12\sqrt{\epsilon}$ fraction of errors and erasures. For each $i$, we replace the $i$th coordinate $c_i$ with the pair $(i, c_i)$. In order to ensure that the rate stays high, we use a RS code over $\F_{q^h}$, with block length $n=q$, where we will take $h=1/\epsilon$. 
\medskip

The inner code will be a good binary deletion code $C_1$ of block length $m$ correcting a $\delta=40\sqrt{\epsilon}$ fraction of deletions, found using Proposition~\ref{thm:balanced}, with $\beta=\delta/4$. Recall that this code only contains ``$\beta$-dense strings,'' for which any interval of length $\beta m$ contains $\beta m/10$ $1$'s. We will assume each codeword begins and ends with a $1$. 

Now, between each pair of adjacent inner codewords of $C_1$, we insert a {\em buffer} of $\delta m$ zeros. This gives us our final code $C$. 
\medskip

\begin{lemma} The rate of $C$ is $1-\tilde{O}(\sqrt{\epsilon})$. 
\end{lemma}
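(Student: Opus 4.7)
The plan is to compute the rate by multiplying together the four rate-loss factors arising from the construction: (i) the outer Reed--Solomon rate, (ii) the overhead from prepending the index $i$ to each outer symbol, (iii) the inner code's rate, and (iv) the overhead from inserting buffers. The key observation motivating the choice $h = 1/\epsilon$ is that the labeling overhead would otherwise be $\Theta(\sqrt{\epsilon})$ (if the RS alphabet coincided with the index set), which could in principle be absorbed but is worth making negligible; taking $h = 1/\epsilon$ makes it $O(\epsilon) \ll \sqrt{\epsilon}$.

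First I would set up the rate accounting precisely. The outer RS code has block length $n = q$ and is chosen to correct a $12\sqrt{\epsilon}$ fraction of errors and erasures, so by the standard Singleton-type trade-off its dimension is $n' = n(1 - 24\sqrt{\epsilon})$, contributing a factor of $1 - 24\sqrt{\epsilon}$ to the rate. Each outer symbol lies in $\F_{q^h}$ and carries $h \log q$ bits, but after labeling with the index $i \in [n]$ we need to encode $(h+1)\log q$ bits into each inner block, losing a factor of $\tfrac{h}{h+1} = 1 - \tfrac{1}{h+1} = 1 - O(\epsilon)$.

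Next I would invoke Proposition~\ref{thm:balanced} to control the inner code. The inner block length $m$ satisfies $m = (h+1)\log q / R_{\mathrm{in}}$, where $R_{\mathrm{in}} = 1 - 2h(\delta) - O(\log(\delta m)/m) - 2^{-\Omega(\beta m)}/m$ and $\delta = 40\sqrt{\epsilon}$. Since $h(\delta) \le \delta \log(1/\delta) + O(\delta) = \tilde{O}(\sqrt{\epsilon})$, and since $q$ (hence $m$) grows with $N$ so that the lower-order error terms vanish, we get $R_{\mathrm{in}} = 1 - \tilde{O}(\sqrt{\epsilon})$. Finally the $\delta m$ zeros inserted between each pair of adjacent inner codewords blow up the length from $nm$ to roughly $nm(1+\delta)$, costing a factor of $1/(1+\delta) = 1 - O(\sqrt{\epsilon})$.

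Multiplying the four factors together yields
\[
R \;=\; (1-24\sqrt{\epsilon}) \cdot \tfrac{h}{h+1} \cdot R_{\mathrm{in}} \cdot \tfrac{1}{1+\delta} \;=\; (1-O(\sqrt{\epsilon}))(1-O(\epsilon))(1 - \tilde{O}(\sqrt{\epsilon}))(1 - O(\sqrt{\epsilon})) \;=\; 1 - \tilde{O}(\sqrt{\epsilon}),
\]
as desired. There is no real obstacle in this proof; the only thing requiring a little care is bookkeeping the label overhead, which is precisely why $h$ was set to $1/\epsilon$, and verifying that the lower-order terms in Proposition~\ref{thm:balanced} are dominated once $m$ is taken large (which is automatic since $m \ge (h+1)\log q / R_{\mathrm{in}} = \Omega(\log N / \epsilon)$).
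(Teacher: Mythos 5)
Your proposal is correct and follows essentially the same decomposition as the paper: multiply the outer RS rate $(1-24\sqrt{\epsilon})$, the labeling factor $h/(h+1)$, the inner rate $1-2h(\delta)-o(1)$ from Proposition~\ref{thm:balanced}, and the buffer factor $1/(1+\delta)$, then use $h(\delta)=\tilde{O}(\sqrt{\epsilon})$ for $\delta=40\sqrt{\epsilon}$. Your version merely spells out the bookkeeping (the choice $h=1/\epsilon$ and the vanishing lower-order terms) in more detail than the paper does.
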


\begin{proof} The rate of the outer (labeled) Reed-Solomon code is $(1-24\sqrt{\epsilon})\cdot \frac{h}{h+1}$. The rate of the inner code $C_1$ can be taken to be $1-2h(\delta)-o(1)$, by Proposition~\ref{thm:balanced}. Finally, adding buffers reduces the rate by a factor of $\frac{1}{1+\delta}$. 

Combining these with our choice of $\delta$, we get that the rate of $C$ is $1-\tilde{O}(\sqrt{\epsilon})$. 
\end{proof}

\begin{lemma} The code $C$ can be decoded from an $\epsilon$ fraction of worst-case deletions in time $N^{\poly(1/\epsilon)}$. 
\end{lemma}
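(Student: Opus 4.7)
The plan is to mirror the proof of Lemma~\ref{thm:high-del-alg}: design a decoder that uses the long zero-buffers to locate the inner codewords, bound the number of outer positions on which the procedure errs or erases, and hand the result to the Reed--Solomon decoder. Concretely, the decoder first scans the received word for maximal runs of zeros of length at least $\beta m$ and calls these \emph{candidate buffers}; the intervals between consecutive candidate buffers form the \emph{candidate blocks}. For each candidate block, it brute-forces over all $c \in C_1$ and, when there is a unique $c$ of which the block is a subsequence of length at least $(1-\delta) m$, records the pair $(i, c_i)$ encoded by $c$. The recorded pairs are fed to the Reed--Solomon decoder, with indices receiving no value or inconsistent values treated as erasures.

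The analysis classifies each inner codeword as \emph{good} or \emph{bad}. I declare codeword $i$ \emph{good} if (i) both flanking buffers retain at least $\beta m$ zeros, (ii) no candidate buffer is carved out inside codeword $i$, and (iii) at most $\beta m/20$ deletions fall inside codeword $i$. I plan to show that a good codeword's candidate block is essentially its surviving bits---modulo a short ``absorbed'' prefix or suffix of zeros that may merge into an adjacent candidate buffer when an extreme $1$ of the codeword is deleted---and that its length stays at least $(1-\delta)m$, so that the LCS guarantee of $C_1$ forces the inner decoder to return the correct $(i, c_i)$. To bound bad codewords I will charge each failure mode against the budget $\epsilon N \le 2\epsilon n m$: obscuring a buffer requires $\ge 3\beta m$ buffer deletions; carving a candidate buffer inside a codeword requires deleting the $\Omega(\beta m)$ ones in some length-$\Theta(\beta m)$ window (using Proposition~\ref{thm:balanced}); and exceeding the internal threshold of (iii) directly costs $\ge \beta m/20$ deletions. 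Each threshold is $\Omega(\sqrt{\epsilon})\, m$, yielding $O(\sqrt{\epsilon})\, n$ bad codewords; I will also check that the $O(\sqrt{\epsilon})\, n$ spurious pairs produced by bad candidate blocks cannot poison more than $O(\sqrt{\epsilon})\, n$ good positions by conflict. With constants chosen small enough, $2s + r$ fits within the $24\sqrt{\epsilon}\, n$ correction radius of the outer RS code, so the RS decoder returns the correct message.

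The hard part will be the ``absorption'' effect: when the outermost $1$'s of codeword $i$ are deleted, the adjacent candidate buffer extends into the codeword's leading or trailing zeros, shortening the candidate block and threatening to drop its length below $(1-\delta)m$. The $\beta$-density property of Proposition~\ref{thm:balanced} is what saves us, since it translates absorption of $k > \beta m$ zeros on one side into $\Omega(k)$ forced $1$-deletions at that end of the codeword, which is then charged against condition (iii). The running-time bound is routine: candidate buffers and blocks are extracted in $\poly(N)$ time; inner brute force enumerates $|C_1| \le 2^m$ strings per block, and $m = O(\log q / \epsilon)$ gives $N \cdot 2^m = N^{\poly(1/\epsilon)}$; the outer RS decoder runs in $\poly(N)$.
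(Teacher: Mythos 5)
Your proposal is correct and follows essentially the same route as the paper's proof: identify long zero-runs as buffers, brute-force decode the intervening windows via the subsequence/LCS guarantee of the $\beta$-dense inner code, charge each failure mode (destroyed buffer, spurious buffer carved inside a codeword, absorption of codeword symbols into adjacent buffers, excess internal deletions) against the $O(\epsilon n m)$ deletion budget at a threshold of $\Omega(\sqrt{\epsilon})m$ each, discard conflicting index pairs, and finish with Reed--Solomon errors-and-erasures decoding. The constants and the exact buffer-detection threshold differ slightly from the paper's ($\beta m$ versus $\delta m/2$), but the decomposition, the use of density to control absorption, and the final accounting are the same.
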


{\bf The algorithm}: 
\begin{itemize}
\item[-] The decoder first locates all runs of at least $\delta m/2$ contiguous zeroes in the received word. These runs (``buffers'') are removed, dividing the codeword into blocks of contiguous symbols which we will call {\it decoding windows}. Any leading zeroes of the first decoding window and trailing zeroes of the last decoding window are also removed. 
This takes time $\poly(N)$. 

\item[-] We begin with an empty set $L$. 

For each decoding window, we apply the decoder from Proposition~\ref{thm:balanced} to attempt to recover a pair $(i, r_i)$. If we succeed, this pair is added to $L$. This takes time $N^{\poly(1/\epsilon)}$. 

\item[-] If for any $i$, $L$ contains multiple pairs with first coordinate $i$, we remove all such pairs from $L$. $L$ thus contains at most one pair $(i, r_i)$ for each index $i$. We apply the Reed-Solomon decoding algorithm to the string $r$ whose $i$th coordinate is $r_i$ if $(i, r_i)\in L$ and erased otherwise, attempting to recover from a $12\sqrt{\epsilon}$ fraction of errors and erasures. This takes time $\poly(N)$. 
\end{itemize}
\medskip

{\bf Analysis}: Notice that if no deletions occur, the decoding windows will all be codewords of the inner code $C_1$, which will be correctly decoded. At a high level, we will show that the adversary cannot corrupt many of these decoding windows, even with an $\epsilon$ fraction of deletions. 
\medskip

We first show that the number of decoding windows considered by our algorithm is close to $n$, the number of windows if there are no deletions. 
\begin{lemma} \label{lem:windows} If an $\epsilon$ fraction of deletions have occurred, then the number of decoding windows considered by our algorithm is between $(1-2\sqrt{\epsilon}) n$ and $(1 + 2\sqrt{\epsilon})n$. 
\end{lemma}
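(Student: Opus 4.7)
The plan is to analyze $B$, the number of maximal runs of $\geq \delta m/2$ consecutive zeros in the received word, which determines the number of decoding windows (up to a $\pm 1$ adjustment at the boundaries, corresponding to whether the first/last windows become empty after stripping leading/trailing zeros). Since the number of decoding windows is $B+1$ up to this adjustment, the goal reduces to showing $B = n(1 \pm O(\sqrt{\epsilon}))$. The total deletion budget is $\epsilon N \leq 2\epsilon n m$, using $N = nm + (n-1)\delta m$ and small $\epsilon$.

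For the lower bound, I will argue that few of the original $n-1$ buffers fail to produce distinct runs. Call a buffer \emph{destroyed} if more than $\delta m/2$ of its $\delta m$ zeros are deleted; every non-destroyed buffer retains a contiguous block of $\geq \delta m/2$ zeros that is automatically part of some maximal long-zero run, so the number of destroyed buffers is at most $(2\epsilon nm)/(\delta m/2) = 4\epsilon n/\delta = O(\sqrt{\epsilon})n$. Two adjacent surviving buffers share a single run only if the intervening inner codeword contributes no $1$ to the received word, meaning every one of its $1$'s must have been deleted. By the $\beta$-density property of $C_1$ (with $\beta = \delta/4$), each inner codeword contains at least $m/10$ ones (partition into $4/\delta$ blocks of length $\beta m$, each contributing $\geq \beta m/10$ ones), so each merge costs $\geq m/10$ deletions and the number of merges is $O(\epsilon n)$. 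Altogether $B \geq (n-1) - O(\sqrt{\epsilon})n - O(\epsilon n) \geq (1-2\sqrt{\epsilon})n$ for small $\epsilon$; the two classes of deletions charged (buffer zeros and inner-codeword ones) occupy disjoint positions, so there is no double counting.

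For the upper bound, split the runs into \emph{legitimate} ones, each meeting some original buffer, and \emph{spurious} ones, lying entirely inside the received image of a single codeword. There are at most $n-1$ legitimate runs since each uses up at least one distinct original buffer. A spurious run corresponds to an interval $I$ in the codeword in which every $1$ is deleted and at least $\delta m/2$ of the zeros survive; in particular $|I| \geq \delta m/2$, and by the same density calculation $I$ contains at least $\delta m/40$ ones, each requiring a deletion. This caps the number of spurious runs by $(2\epsilon nm)/(\delta m/40) = 80\epsilon n/\delta = 2\sqrt{\epsilon}n$, giving $B \leq (n-1) + 2\sqrt{\epsilon}n$ and hence the desired bound on the number of decoding windows.

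The main difficulty lies in extracting the merge and spurious-run bounds from density rather than some cruder property: without the $\beta$-density guarantee of Proposition~\ref{thm:balanced}, even a single codeword could by itself contain a run of $\delta m/2$ zeros, destroying the whole approach. The tuning $\beta = \delta/4$ ensures that every codeword-internal stretch of zeros is strictly shorter than $\delta m/2$, and that each lost buffer or spurious run forces the adversary to spend $\Omega(\delta m)$ deletions, which is exactly what matches the $\sqrt{\epsilon}$ slack in the final bounds.
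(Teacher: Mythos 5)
Your proof is correct and follows essentially the same route as the paper: charge the adversary's deletion budget against the $\beta$-density of inner codewords to bound both the creation of spurious zero-runs (upper bound) and the destruction or merging of buffers (lower bound). Your version is slightly more explicit about the run-to-window correspondence, the per-codeword count of $m/10$ ones, and the disjointness of the charged deletions, but the underlying argument is the same.
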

\begin{proof} Recall that the adversary can cause at most $\epsilon n m(1 + \delta)\leq 2\epsilon nm$ deletions. 
\medskip

Upper bound: The adversary can increase the number of decoding windows only by creating new runs of $\delta m/2$ zeroes (that are not contained within a buffer). Such a new run must be contained entirely within an inner codeword $w\in C_1$. However, as $w$ is $\delta/4$-dense, in order to create a run of zeroes of length $\delta m/2$, at least $\delta m/20=2\sqrt{\epsilon}$ $1$'s must be deleted for each such run. In particular, at most $\sqrt{\epsilon} n$ blocks can be added. 
\smallskip

Lower bound: The adversary can decrease the number of decoding windows only by decreasing the number of buffers. He can achieve this either by removing a buffer, or by merging two buffers. Removing a buffer requires deleting $\delta m/2=20\sqrt{\epsilon}m$ zeroes from the original buffer. Merging two buffers requires deleting all $1$'s in the inner codewords between them. As inner codewords are $\delta/4$-dense, this requires at least $\sqrt{\epsilon}m$ deletions for each merged buffer. In particular, at most $2 \sqrt{\epsilon} n$ buffers can be removed. 
\end{proof}

We now show that almost all of the decoding windows being considered are decoded correctly by the inner decoder. 
\begin{lemma} The number of decoding windows which are incorrectly decoded is at most $4\sqrt{\epsilon}n$. 
\end{lemma}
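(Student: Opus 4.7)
The plan is to partition the bad windows into three structural classes, attach to each class a pool of deletions in a known region of the received word, and set up the classes so that the three pools are disjoint. Summing the per-pool lower bounds then gives the claimed count.

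I would classify a bad window as \emph{type~M} (merged) if it spans at least two inner codewords, each retaining some $1$'s; \emph{type~S} (split) if it lies entirely inside a single inner codeword and has a false buffer as one of its boundaries; and \emph{type~O} (over-deleted) if both its boundaries are intact real buffers and it contains no false buffer, but the corresponding codeword has lost more than $\delta m$ symbols. Any bad window falls into one of these: a window with intact buffer boundaries, no internal false buffer, and at most $\delta m$ deletions in its codeword is a $\leq\delta m$-deletion subsequence of $C_1(i,c_i)$, so the inner decoder of Proposition~\ref{thm:balanced} returns $(i,c_i)$ and it is good.

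Next I would charge deletions as follows. A type-M window contains in its interior at least one destroyed real buffer between two non-eliminated codewords; since each inner codeword begins and ends with a $1$, the buffer cannot be absorbed by neighbouring zero runs without separately deleting codeword $1$'s, so making its zero run drop below length $\delta m/2$ requires deleting at least $\delta m/2 = 20\sqrt{\epsilon}\,m$ of the buffer's own zeros. Each type-S window has a false buffer on its interior side, and the $\beta$-density argument of Lemma~\ref{lem:windows} (with $\beta=\delta/4$) forces at least $2\sqrt{\epsilon}\,m$ codeword $1$'s to be deleted per false buffer; since one false buffer yields at most two type-S sub-windows, I charge $\sqrt{\epsilon}\,m$ codeword $1$-deletions to each S-window. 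A type-O window is charged its own $\geq\delta m = 40\sqrt{\epsilon}\,m$ in-codeword deletions. The three pools (M: buffer zeros; S: $1$'s in codewords containing a false buffer; O: deletions in codewords with no false buffer) occupy disjoint regions of the received word, so writing $M,S,O$ for the numbers of bad windows of each type and using the adversary's budget of at most $2\epsilon nm$ deletions,
\[
20 M \sqrt{\epsilon}\,m \;+\; S \sqrt{\epsilon}\,m \;+\; 40 O \sqrt{\epsilon}\,m \;\leq\; 2\epsilon nm,
\]
which gives $20M + S + 40O \leq 2\sqrt{\epsilon}\,n$ and therefore $M+S+O \leq 2\sqrt{\epsilon}\,n \leq 4\sqrt{\epsilon}\,n$.

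The main obstacle is ensuring disjointness of the three deletion pools when a single codeword suffers several pathologies at once (for instance, a destroyed boundary buffer together with an interior false buffer, in which case some sub-pieces are type-M rather than type-S). The category definitions above are crafted precisely so that each bad window is charged to a physically distinct region, avoiding any double counting.
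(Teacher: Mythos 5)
Your overall strategy (classify the bad windows, charge each class to a disjoint pool of deletions) is the same as the paper's, but your taxonomy has a hole, and it is exactly the case that the paper's density hypothesis is there to handle. A decoding window can be bad even though both of its boundaries are \emph{intact real} buffers, it contains no false buffer, and its codeword suffered far fewer than $\delta m$ adversarial deletions: the adversary deletes the codeword's boundary $1$ together with the $1$'s in a short interval at the codeword's edge, so that a prefix (or suffix) of the codeword becomes a run of zeros contiguous with the adjacent real buffer. The algorithm then strips that whole extended zero run, and those codeword symbols vanish from the decoding window \emph{without having been deleted}. Your exhaustiveness argument --- ``at most $\delta m$ deletions in its codeword implies the window is a $\le \delta m$-deletion subsequence of $C_1(i,c_i)$'' --- conflates symbols deleted by the adversary with symbols missing from the window; the window is indeed a subsequence of $C_1(i,c_i)$, but possibly of length below $(1-\delta)m$, so Proposition~\ref{thm:balanced} gives no guarantee. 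This is case 1(b) of the paper's proof.

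The omission matters quantitatively, not just formally. By $\delta/4$-density, absorbing $\delta m/4$ codeword symbols into an adjacent real buffer costs the adversary only about $\delta m/40=\sqrt{\epsilon}\,m$ deletions of $1$'s --- forty times cheaper than the $\delta m$ you charge to a type-O window --- so this mechanism alone can corrupt up to $2\sqrt{\epsilon}\,n$ windows, which is already your entire claimed budget of $2\sqrt{\epsilon}\,n$ before the other types are counted. Adding a fourth class for these ``absorbed'' windows also reopens your disjointness claim: the $1$'s deleted to enable absorption lie inside the codeword and could coincide with the in-codeword deletions you charge to types S and O. The paper resolves this by splitting on whether the absorbed fraction $\rho$ is below or above $\delta/2$ (charging $\delta m/2$ in-codeword deletions in the first sub-case and $\sqrt{\epsilon}\,m$ deleted $1$'s in the second), and its final constant $4\sqrt{\epsilon}\,n$ is tight against this case rather than having the factor-of-two slack your computation suggests.
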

\begin{proof} The inner decoder will succeed on each decoding window which is a subsequence of a valid inner codeword $w\in C_1$ of length at least $(1-\delta)m$. This will happen unless: 

\begin{enumerate}
\item The window is too short:
\begin{itemize}
\item[(a)] a subsequence of $w$ has been marked as a (new) buffer, or
\item[(b)] a $\rho$ fraction of $w$ has been marked as part of the adjacent buffers, combined with a $\delta-\rho$ fraction of deletions within $w$. 
\end{itemize}
\item The window is not a subsequence of a valid inner codeword: the window contains buffer symbols and/or a subsequence of multiple inner codewords.
\end{enumerate}

We first show that (1) holds for at most $3\sqrt{\epsilon} n$ windows. 

From the proof of Lemma~\ref{lem:windows}, there can be at most $\sqrt{\epsilon} n$ new buffers introduced, thus handling Case 1(a). In Case 1(b), if $\rho<\delta/2$, then there must be $\delta/2$ deletions within $w$. On the other hand, if $\rho\geq \delta/2$, one of two buffers adjacent to $w$ must have absorbed at least $\delta m/4$ symbols of $w$, so as $w$ is $\delta/4$-dense, this requires $\delta m/40=\sqrt{\epsilon} m$ deletions, so can occur in at most $2\sqrt{\epsilon} n$ windows. 
\medskip

We also have that (2) holds for at most $\sqrt{\epsilon} n$ windows, as at least $\delta m/2$ symbols must be deleted from a buffer in order to prevent the algorithm from marking it as a buffer. As in Lemma~\ref{lem:windows}, this requires $20\sqrt{\epsilon}$ deletions for each merged window, and so there are at most $\sqrt{\epsilon n}$ windows satisfying case~(2). 
\end{proof}

We now have that the inner decoder outputs $(1-6\sqrt{\epsilon})n$ correct values. After removing possible conflicts in the last step of the algorithm, we have at least $(1-12\sqrt{\epsilon}) n$ correct values, so that the Reed-Solomon decoder will succeed and output the correct message. 

\section{List-decoding binary deletion codes}
\label{sec:list}

The results of Section~\ref{sec:low-del} show that we can have good explicit binary codes when the deletion fraction is low. In this section, we address the opposite regime, of high deletion fraction. 
As a first step, notice that in any reasonable model, including list-decoding, we cannot hope to efficiently decode from a $1/2$ deletion fraction with a polynomial list size and constant rate. With block length $n$ and $n/2$ deletions, the adversary can ensure that what is received is either $n/2$ $1$'s or $n/2$ $0$'s. 

Thus, for binary codes and $\epsilon>0$, we will consider the question of whether it is possible to list decode from a fraction $1/2-\epsilon$ of deletions. 
\medskip

\begin{defn}
We say that a code $C\subseteq\{0,1\}^m$ is list-decodable from a $\delta$ deletion fraction with list size $L$ if every sequence of length $(1-\delta) m$ is a subsequence of at most $L$ codewords. If this is the case, we will call $C$ $(\delta, L)$ {\it list-decodable from deletions}.
\end{defn}
\medskip

\begin{rmk} Although the results of this section are proven in the setting of list-decoding, it is {\em not} known that we cannot have unique decoding of binary codes up to deletion fraction $1/2-\epsilon$. See the first open problem in Section~\ref{sec:open}. 
\end{rmk}

\subsection{List-decodable binary deletion codes (existential)}
\label{sec:list-exis}

In this section, we show that good list-decodable codes exist. This construction will be the basis of our explicit construction of list-decodable binary codes.  The proof appears in the appendix. 

\begin{thm} \label{binary-list} Let $\delta,L>0$. Let $C\subseteq\{0,1\}^m$ consist of $2^{Rm}$ independently, uniformly chosen strings, where $R\leq 1 - h(\delta) - 3/L$. Then $C$ is $\bigl(\delta, L\bigr)$ list-decodable from deletions with probability at least $1-2^{-m}$. 

Moreover, such a code can be constructed and decoded in time $2^{\poly(mL)}$. 
\smallskip

In particular, when $\delta = 1/2-\epsilon$, we can construct and decode in time $2^{\poly(m/\epsilon)}$ a code $C\subseteq \{0,1\}^m$ of rate $\Omega(\epsilon^2)$ which is $\bigl(\delta, O(1/\epsilon^2)\bigr)$ list-decodable from deletions. 
\end{thm}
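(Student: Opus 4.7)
The plan is a direct probabilistic union bound: the bad event for $(\delta,L)$ list-decodability is that some string $s$ of length $\ell=(1-\delta)m$ is a subsequence of at least $L+1$ codewords, and we want to show this has probability at most $2^{-m}$ over the random choice of $C$. First I would fix a string $s\in\{0,1\}^\ell$ and use the binary bound of Lemma~\ref{lem:count} to estimate the probability that a uniformly random $c\in\{0,1\}^m$ contains $s$ as a subsequence:
\[
p \;:=\; \Pr_c[s \preceq c] \;\leq\; \frac{\delta m\binom{m}{\ell}}{2^m} \;\leq\; \delta m\cdot 2^{-(1-h(\delta))m},
\]
where the second inequality uses the standard entropy estimate $\binom{m}{\ell}\leq 2^{h(\delta)m}$. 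For a fixed $(L+1)$-tuple of codewords, independence gives failure probability at most $p^{L+1}$.

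Next I would union-bound over at most $2^{(1-\delta)m}$ choices of $s$ and at most $\binom{2^{Rm}}{L+1}\leq 2^{R(L+1)m}$ choices of the tuple, so the total failure probability is at most
\[
2^{(1-\delta)m}\cdot 2^{R(L+1)m}\cdot (\delta m)^{L+1}\cdot 2^{-(1-h(\delta))(L+1)m}.
\]
Taking logarithms, requiring this to be $\leq 2^{-m}$ reduces to
\[
R \;\leq\; 1-h(\delta) \;-\; \frac{2-\delta}{L+1} \;-\; \frac{\log(\delta m)}{m},
\]
and the slack in the hypothesis $R\leq 1-h(\delta)-3/L$ is chosen precisely to absorb both the $(2-\delta)/(L+1)$ term and the lower-order $\log(\delta m)/m$ term (for $m$ not too tiny relative to $L$). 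The particular claim about $\delta=1/2-\epsilon$ then follows from the Taylor estimate $1-h(1/2-\epsilon)=\Omega(\epsilon^2)$: choosing $L=\Theta(1/\epsilon^2)$ makes $3/L$ a constant fraction of $1-h(\delta)$, so there is room for rate $R=\Omega(\epsilon^2)$.

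For the algorithmic claim, I would note that the entire union bound only touches $L+1$ codewords at a time, so an $(L+1)$-wise independent ensemble of size $2^{O(Lm)}$ satisfies the same bound; enumerating the seeds of such an ensemble and verifying the list-decoding property by brute force (checking, for every $s$ of length $\ell$ and every $(L+1)$-subset of codewords, whether $s$ is contained in all of them) takes time $2^{\poly(mL)}$. Alternatively one can derandomize by conditional expectations on the bad count. The decoder is trivial: given a received $s$, test subsequence-containment against each of the $2^{Rm}$ codewords in time $\poly(m)\cdot 2^{Rm}=2^{\poly(m/\epsilon)}$, keeping those that match.

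The main obstacle is purely bookkeeping rather than conceptual: one must make sure the $3/L$ slack in the statement really does swallow both the enumeration term $(1-\delta)m$ from choosing $s$ and the $(L+1)\log(\delta m)$ term from the subsequence count, and one must spell out carefully that $(L+1)$-wise independence is all the analysis uses so that the derandomized construction runs in time $2^{\poly(mL)}$ rather than in the prohibitive time $2^{m\cdot 2^{Rm}}$ of naive exhaustive search over all codes.
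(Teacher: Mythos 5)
Your argument for the probabilistic claim is essentially the paper's: the same binary estimate from Lemma~\ref{lem:count} bounds $\Pr[s \preceq c]$, the same union bound over length-$\ell$ strings and over tuples of codewords is applied, and the resulting condition $R \leq 1-h(\delta) - \frac{2-\delta}{L+1} - \frac{\log(\delta m)}{m}$ matches the paper's (the paper works with $L$-subsets rather than $(L+1)$-subsets, an immaterial off-by-one; your version is if anything the more faithful reading of the definition). Both arguments share the same implicit requirement that $m/\log m \gtrsim L$ so that the $3/L$ slack absorbs the $\frac{\log(\delta m)}{m}$ term, which you state explicitly and the paper does not. The specialization to $\delta = 1/2-\epsilon$ via $1-h(1/2-\epsilon) = \Theta(\epsilon^2)$ and $L = \Theta(1/\epsilon^2)$ is also identical.

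Where you genuinely diverge is the constructive clause. The paper derandomizes greedily, as in Theorem~\ref{low-del}: it adds strings one at a time, rejecting any candidate that would create $L$ codewords sharing a common subsequence of length $\ell$, with the counting bound guaranteeing that a valid candidate always exists until the target size is reached. You instead observe that the union bound only ever examines $L+1$ codewords jointly, so an $(L+1)$-wise independent ensemble (seed length $O(Lm)$) suffices, and one can enumerate seeds and verify the list-decoding property by brute force. Both yield $2^{\poly(mL)}$ construction time. Your route has the mild advantage of making the sufficiency of bounded independence explicit (and the verification step is cheaper than you suggest: for each of the $2^{(1-\delta)m}$ strings $s$ one just counts containing codewords directly, with no need to enumerate $(L+1)$-subsets); the greedy route has the advantage of needing no auxiliary pseudorandomness machinery and of composing naturally with the additional constraints imposed in Proposition~\ref{thm:balanced}. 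Either is acceptable here.
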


\subsection{List-decodable binary deletion codes (explicit)}

We now use the existential construction of Theorem~\ref{binary-list} to give an explicit construction of constant-rate list-decodable binary codes. Our code construction uses  Parvaresh-Vardy codes (\cite{PV-focs05}) as outer codes, and an inner code constructed using Section~\ref{sec:list-exis}. 

The idea is to list-decode ``enough'' windows and then apply the list recovery algorithm of Theorem~\ref{pv-list}. 

\begin{thm} \label{thm:list} Let $0<\epsilon<1/2$. There is an explicit binary code $C\subseteq\{0,1\}^N$ of rate $\tilde{\Omega}(\epsilon^3)$ which is list-decodable from a $1/2-\epsilon$ fraction of deletions with list size $(1/\epsilon)^{O(\log\log \epsilon)}$. 

This code can be constructed, encoded, and list-decoded in time $N^{\poly(1/\epsilon)}$. 
\end{thm}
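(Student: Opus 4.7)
The plan is to follow the same concatenation template used in Sections~\ref{sec:const-adv} and~\ref{sec:low-del}, but to replace Reed-Solomon with a Parvaresh-Vardy (PV) outer code so that the outer step can simultaneously absorb (i) the fact that each inner decoding returns a \emph{list} of candidates and (ii) the decoder's inability to pin down inner codeword boundaries under a $1/2-\epsilon$ fraction of deletions. Concretely, I would take the outer code to be a PV code over $\F_{q^h}$ of length $n=q$, with each outer symbol $c_j$ labeled by its index $j$ exactly as in Section~\ref{sec:const-adv}, and take the inner code $C_1$ to be the $\bigl(1/2-\epsilon/2,\ O(1/\epsilon^2)\bigr)$ binary list-decodable code of length $m$ and rate $\Omega(\epsilon^2)$ guaranteed by Theorem~\ref{binary-list}. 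The folding parameter $h$ is a free knob fixed at the end.

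The decoder slides a window of length exactly $(1/2+\epsilon/2)m$ over the received word $y$ at every starting position (at most $|y|\le nm$ of them), applies the inner list decoder to each window, and for every returned pair $(j,s)$ adds $s$ to the list $S_j\subseteq\F_{q^h}$. Any $S_j$ that ever exceeds a size threshold $T$ is replaced by $\emptyset$. Finally, run the PV list-recovery algorithm (Theorem~\ref{pv-list}) on $(S_j)_{j\in[n]}$, asking for all outer codewords that agree with the $S_j$'s at $\alpha n$ positions, for $\alpha=\Theta(\epsilon)$. Each inner list decoding runs in time $2^{\poly(m/\epsilon)}=N^{\poly(1/\epsilon)}$, there are at most $N$ windows, and PV list recovery is polynomial, so the overall running time is $N^{\poly(1/\epsilon)}$.

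Correctness splits into three steps. First, averaging on the deletion budget $(1/2-\epsilon)nm$ shows that at most $\tfrac{1-2\epsilon}{1-\epsilon}\,n<(1-\epsilon)n$ inner codewords can suffer more than $(1/2-\epsilon/2)m$ deletions, so at least $\epsilon n$ indices $j$ are good; for each such $j$ the surviving symbols of $C_1(j,c_j)$ occupy a contiguous interval of $y$ of length $\ge (1/2+\epsilon/2)m$, hence the sliding window beginning at its left endpoint is a subsequence of $C_1(j,c_j)$ of exactly the right length, and the inner list decoder is guaranteed to place $c_j$ into $S_j$. Second, the total number of candidate symbols added across all windows is at most $|y|\cdot O(1/\epsilon^2)=O(nm/\epsilon^2)$, so the size cap discards at most $O(nm/(\epsilon^2 T))$ indices; choosing $T=\Theta(m/\epsilon^3)$ keeps this below $\epsilon n/4$ and leaves $\tfrac{3\epsilon}{4}n$ good indices with $c_j\in S_j$ and $|S_j|\le T$. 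Third, invoke the PV list-recovery guarantee with agreement $\alpha=\Theta(\epsilon)$ and per-position list size $T$ to recover a small list of outer codewords, all of which one then re-encodes and checks against $y$ to form the final output list.

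The main obstacle is tuning $h$ to hit all quantitative targets at once. PV list recovery with folding $h$ works for every outer rate $R_{\mathrm{out}}$ satisfying (roughly) $\alpha^{(h+1)/h}\gtrsim R_{\mathrm{out}}\cdot T^{1/h}$, and outputs a list of size polynomial in $T^h$. Writing $T=\poly(1/\epsilon)\cdot(\log N)^{O(1)}$, the choice $h=\Theta\!\bigl(\log(1/\epsilon)/\log\log(1/\epsilon)\bigr)$ shrinks $T^{1/h}$ to a polylogarithmic factor in $1/\epsilon$ and keeps $\alpha^{(h+1)/h}=\tilde\Omega(\epsilon)$, so $R_{\mathrm{out}}=\tilde\Omega(\epsilon)$ and the final rate $R_{\mathrm{out}}\cdot\Omega(\epsilon^2)=\tilde\Omega(\epsilon^3)$ matches the claim; with the same $h$ the PV output list size works out to $(1/\epsilon)^{O(\log\log(1/\epsilon))}$, as stated. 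The remaining bookkeeping parallels the analysis of Lemma~\ref{thm:high-del-alg}.
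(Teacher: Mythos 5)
Your construction and decoding strategy are essentially the paper's: a Parvaresh--Vardy outer code with indexed symbols, the list-decodable inner code of Theorem~\ref{binary-list} at radius $1/2-\Theta(\epsilon)$, sliding windows of length $(1/2+\Theta(\epsilon))m$ decoded by brute force, an averaging argument showing an $\Omega(\epsilon)$ fraction of inner blocks retain a full window, and PV list recovery with order parameter $s$ (your $h$) of size $\Theta(\log(1/\epsilon))$, which is what produces the $(1/\epsilon)^{O(\log\log(1/\epsilon))}$ list size.

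The one substantive deviation is that you place a window at \emph{every} starting position rather than at every $\delta m$-th position as the paper does. This forces your per-position cap $T$ to be $\Omega(m/\epsilon^3)$ (otherwise too many indices are discarded), and since $m=\Theta(s\log q/\epsilon^2)$ grows with the block length, the factor $\ell^{1/(s+1)}$ in the agreement condition of Theorem~\ref{pv-list} contributes a term $(\log N)^{\Theta(1/s)}$ that is not $\poly\log(1/\epsilon)$ uniformly in $N$; your claim that $T^{1/h}$ is polylogarithmic in $1/\epsilon$ silently drops this, so as stated the achievable outer rate (hence the final rate) degrades, albeit extremely slowly, with $N$ instead of being $\tilde{\Omega}(\epsilon^3)$ for all lengths. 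The fix is exactly the paper's choice: space the windows $\delta m$ apart, so that only $O(N/\delta)$ inner decodings occur, $\sum_j\abs{S_j}=O(N/\delta^3)$, and one can take $\ell=O(1/\delta^3)$ independent of $m$ with no cap-and-discard step at all. The only cost is relaxing ``good block'' from ``at most $(1/2-\delta)m$ deletions'' to ``at most $(1/2-2\delta)m$ deletions'' so that some window of the coarser grid still falls inside the surviving portion, which your averaging argument already accommodates.
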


We will appeal in our analysis to the following theorem, which can be
found in~\cite{GR-soft}.

\begin{thm}[\cite{GR-soft}, Corollary 5] \label{pv-list}
For all integers $s\geq 1$, for all prime powers $r$, every pair of integers $1<K\leq N\leq q$, there is an explicit $\F_r$-linear map $E\colon \F_q^K\to \F_{q^s}^N$ whose image $C'$ is a code satisfying:

\begin{itemize}
\item[-] There is an algorithm which, given a collection of subsets
  $S_i\subseteq \F_{q^s}$ for $i\in[N]$ with $\sum_i \abs{S_i}\leq N
  \ell$, runs in $\poly\bigl((rs)^s, q,\ell\bigr)$ time, and outputs a
  list of size at most $O\bigl((rs)^s N\ell/K\bigr)$ that includes
  precisely the set of codewords $(c_1,\dotsc, c_N)\in C'$ that
  satisfy $c_i\in S_i$ for at least $\alpha N$ values of $i$, provided
\[\alpha > (s+1) (K/N)^{s/(s+1)} \ell^{1/(s+1)}.\]
\end{itemize} 
\end{thm}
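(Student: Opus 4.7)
The plan is to combine the existential inner code from Theorem~\ref{binary-list} with a Parvaresh--Vardy outer code, using the list-recovery algorithm of Theorem~\ref{pv-list} to stitch together the inner lists. Fix an outer PV code $C_{\mathrm{out}}$ of length $n$ over $\F_{q^s}$ given by Theorem~\ref{pv-list}, with parameters $K, s, q, r$ to be chosen; following the labeling trick used in Sections~\ref{sec:const-adv} and~\ref{sec:low-del}, encode each outer position $i\in[n]$ together with its symbol $c_i$, so the inner code encodes pairs in $[n]\times\F_{q^s}$. For the inner code, take $C_{\mathrm{in}}$ from Theorem~\ref{binary-list} with deletion fraction $1/2-\epsilon/2$, rate $\Omega(\epsilon^2)$, and list size $L_{\mathrm{in}}=O(1/\epsilon^2)$; its block length is then $m=\Theta(\log(nq^s)/\epsilon^2)$. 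The final code $C$ has block length $N=nm$.

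For decoding, enumerate every contiguous \emph{window} of the received word of length in $[(1/2+\epsilon/2)m,\,m]$ and apply the inner list-decoder to each. Each returned pair $(i,c)$ places $c$ into a candidate set $S_i\subseteq\F_{q^s}$. Then feed $(S_1,\dotsc,S_n)$ into the PV list-recovery algorithm of Theorem~\ref{pv-list} to produce the final list of messages.

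The correctness step is an averaging argument. If $d_i$ is the number of deletions inside the $i$-th inner block, then $\sum_i d_i\leq(1/2-\epsilon)nm$, so the number of positions with $d_i>(1/2-\epsilon/2)m$ is at most $\tfrac{1-2\epsilon}{1-\epsilon}n$; hence at least $\tfrac{\epsilon}{1-\epsilon}n\geq\epsilon n$ positions $i$ are \emph{good}, meaning the $i$-th block retains a substring of length $\geq(1/2+\epsilon/2)m$ in the received word. For each good $i$ that exact substring is one of the enumerated windows, is a subsequence of $C_{\mathrm{in}}(i,c_i)$ with at most a $1/2-\epsilon/2$ deletion fraction, and so appears in the inner decoder's list, placing $c_i\in S_i$. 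Thus the PV decoder sees agreement fraction $\alpha\geq\epsilon$. To bound $\ell$, note the number of windows is $O(nm^2)$ and each produces at most $L_{\mathrm{in}}$ candidates, giving $\sum_i|S_i|\leq O(nm^2/\epsilon^2)$, so $\ell=O(m^2/\epsilon^2)$.

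It remains to choose parameters to meet the PV condition $\epsilon>(s+1)(K/n)^{s/(s+1)}\ell^{1/(s+1)}$ while keeping the rate $\tilde\Omega(\epsilon^3)$ and the list size as claimed. Taking $r$ a small constant (e.g.\ $r=2$) and $s=\Theta(\log(1/\epsilon))$ makes $\ell^{1/(s+1)}$ only polylogarithmic in $1/\epsilon$, so one can afford $K/n=\tilde\Theta(\epsilon)$, giving outer rate $K/(ns)=\tilde\Omega(\epsilon)$ and total rate $\tilde\Omega(\epsilon^3)$. The final list size is dominated by the $(rs)^s\cdot N\ell/K$ factor from Theorem~\ref{pv-list}, and with $r=O(1)$, $s=\Theta(\log(1/\epsilon))$ this evaluates to $(1/\epsilon)^{O(\log\log(1/\epsilon))}$ as claimed. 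The main obstacle is this parameter balancing: one must pick $s$ large enough to crush both the $\ell^{1/(s+1)}$ factor (which inherits a $\log n$ through $m$) and the $(K/n)^{s/(s+1)}$ factor, yet not so large that $(rs)^s$ blows past $(1/\epsilon)^{O(\log\log(1/\epsilon))}$; the fact that $m$ and hence $\ell$ depend on $\log n$ through the inner block length is what forces the logarithmic $s$ and produces the mildly super-polynomial list size.
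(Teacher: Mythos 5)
Your proposal does not prove the statement it was assigned. Theorem~\ref{pv-list} is the list-recovery guarantee for Parvaresh--Vardy codes themselves (imported in the paper from \cite{GR-soft}, Corollary~5): one must exhibit the explicit $\F_r$-linear encoding $E\colon\F_q^K\to\F_{q^s}^N$ and prove that, given input sets $S_i$ of total size $N\ell$, all codewords agreeing with the $S_i$ in an $\alpha$ fraction of positions can be recovered in the stated time with the stated list size, whenever $\alpha>(s+1)(K/N)^{s/(s+1)}\ell^{1/(s+1)}$. A proof of this requires the Parvaresh--Vardy machinery: encoding a message polynomial $f$ together with its correlated powers $f^{h},f^{h^2},\dotsc$ reduced modulo an irreducible $E(X)$, an interpolation step that fits a low-weighted-degree multivariate polynomial $Q$ through all pairs $(x_i,y)$ with $y\in S_i$ (with multiplicities), and a root-finding step over the extension field $\F_q[X]/(E(X))$; the threshold on $\alpha$ and the $O((rs)^sN\ell/K)$ list bound fall out of the degree accounting in that argument. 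None of this appears in your writeup.

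What you have written instead is a proof of Theorem~\ref{thm:list} --- the concatenated binary list-decodable deletion code --- which \emph{uses} Theorem~\ref{pv-list} as a black box. As a proof of Theorem~\ref{pv-list} your argument is therefore circular: its very first step invokes ``the list-recovery algorithm of Theorem~\ref{pv-list}.'' (As a proof of Theorem~\ref{thm:list}, your argument does essentially match the paper's: decoding windows of length roughly $(1/2+\delta)m$, an averaging argument showing an $\epsilon$ fraction of inner blocks survive with deletion fraction at most $1/2-2\delta$, and parameter choices $s=\Theta(\log(1/\epsilon))$, $r=O(1)$ yielding rate $\tilde\Omega(\epsilon^3)$ and list size $(1/\epsilon)^{O(\log\log(1/\epsilon))}$. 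But that is a different theorem.) To repair the submission you would need to either reproduce the Parvaresh--Vardy/Guruswami--Rudra soft-decoding analysis or, as the paper does, explicitly cite it as an external result rather than prove it.
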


\medskip

\noindent {\bf The code}: We set $s=O(\log 1/\epsilon)$, $r=O(1)$, and
$N=K\poly\bigl(\log(1/\epsilon)\bigr)/\epsilon$ in
Theorem~\ref{pv-list} in order to obtain a code $C'\subseteq
\F_{q^s}^N$.  We modify the code, replacing the $i$th coordinate $c_i$
with the pair $(i, c_i)$ for each $i$, in order to obtain a code
$C''$. This latter step only reduces the rate by a constant factor.

Recall that we are trying to recover from a $1/2-\epsilon$ fraction of
deletions. We use Theorem~\ref{binary-list} to construct an inner code
$C_1\colon [N]\times\F_q^s\to \{0,1\}^m$ of rate $\Omega(\epsilon^2)$
which recovers from a $1/2-\delta$ deletion fraction (where we will
set $\delta=\epsilon/4$). Our final code $C$ is a concatenation of
$C''$ with $C_1$, which has rate $\tilde{\Omega}(\epsilon^3)$.

\begin{thm} $C$ is list-decodable from a $1/2-\epsilon$ fraction of deletions in time $N^{\poly(1/\epsilon)}$. 
\end{thm}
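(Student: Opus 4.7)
The plan is to invoke the Parvaresh--Vardy list-recovery algorithm (Theorem~\ref{pv-list}) on candidate sets $S_i \subseteq \F_{q^s}$ that are built by running the inner list-decoder of Theorem~\ref{binary-list} on many substrings of the received word $y$. Since deletions preserve order, $y$ decomposes uniquely as $y = y_1 y_2 \cdots y_N$, where $y_i$ is the surviving subsequence of the $i$th inner block $C_1(i, c_i)$. A simple counting argument shows many $y_i$ are long: defining index $i$ to be \emph{good} when $|y_i| \geq (1/2+\delta)m + \tau$, the total deletion budget $(1/2-\epsilon)Nm$ forces each bad index to absorb more than $(1/2-\delta)m - \tau$ deletions, so with $\delta = \epsilon/4$ and $\tau = \Theta(\delta m)$ the good-fraction $\alpha$ is at least $\Omega(\epsilon)$.

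The next step is to choose a limited but sufficient family of candidate windows: take all substrings of $y$ of length exactly $(1/2+\delta)m$ whose starting position is a multiple of $\tau$. This gives only $O(|y|/\tau) = O(N/\epsilon)$ windows, while the slack $\tau$ built into the definition of \emph{good} guarantees that for every good $i$ some sampled window lies entirely inside $y_i$. Such a window is a subsequence of $C_1(i, c_i)$, so the inner list-decoder returns a list of at most $L_{\mathrm{in}} = O(1/\epsilon^2)$ pairs $(j,c)$ containing the true $(i, c_i)$. Routing each output $c$ into $S_j$ then achieves $c_i \in S_i$ for every good $i$, and bounds $\sum_j |S_j| \leq O(N/\epsilon) \cdot L_{\mathrm{in}} = O(N/\epsilon^3)$, giving a per-coordinate parameter $\ell = O(1/\epsilon^3)$. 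This bound being independent of $N$ is crucial, since the inner block length $m$ itself must grow with $\log N$ in order to label all $N$ outer positions.

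The final step is to apply Theorem~\ref{pv-list} with $s = \Theta(\log(1/\epsilon))$ and $r = O(1)$. This choice makes $\ell^{1/(s+1)} = O(1)$, so the recovery condition $\alpha > (s+1)(K/N)^{s/(s+1)} \ell^{1/(s+1)}$ reduces to $K/N = \tilde\Omega(\epsilon)$. The resulting outer rate $K/(sN) = \tilde\Omega(\epsilon)$ combines with the inner rate $\Omega(\epsilon^2)$ to give the claimed total rate $\tilde\Omega(\epsilon^3)$; the inner decoder runs in time $2^{\poly(m/\epsilon)} = N^{\poly(1/\epsilon)}$ per window and the PV recovery runs in time $\poly((rs)^s, q, \ell) = N^{\poly(1/\epsilon)}$, so the total runtime is $N^{\poly(1/\epsilon)}$; and the output list size is $O((rs)^s N\ell/K) = (1/\epsilon)^{O(\log\log(1/\epsilon))}$, which one may then prune by re-encoding each candidate and checking whether $y$ is a subsequence of the resulting codeword.

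The main obstacle I anticipate is the joint balancing in that last step. The sampling spacing $\tau$ trades $\alpha$ (which shrinks as $\tau$ grows) against $\ell$ (which grows as $\tau$ shrinks), and the parameter $s$ trades $\ell^{1/(s+1)}$ (small for large $s$) against the outer rate $K/(sN)$ (also small for large $s$). Threading these together to reach rate $\tilde\Omega(\epsilon^3)$ with subexponential list size is what ultimately dictates both the cubic $\epsilon$-dependence of the rate and the quasipolynomial list bound $(1/\epsilon)^{O(\log\log(1/\epsilon))}$ in the theorem statement.
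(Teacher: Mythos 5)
Your proposal is correct and follows essentially the same route as the paper: sliding windows of length $(1/2+\delta)m$ spaced $\Theta(\delta m)$ apart, an averaging argument showing an $\Omega(\epsilon)$ fraction of inner blocks retain enough symbols to contain a full window, inner list-decoding of each window via Theorem~\ref{binary-list}, and Parvaresh--Vardy list recovery with $s=\Theta(\log(1/\epsilon))$ and $\ell=O(1/\epsilon^3)$. Your explicit slack parameter $\tau$ and the final pruning step are slightly more careful than the paper's presentation but change nothing substantive.
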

\begin{proof} Our algorithm first defines a set of ``decoding windows''. These are intervals of length $(1/2 + \delta)m$ in the received codeword which start at positions $1 + t\delta m$ for $t=0,1,\dotsc, N/\delta -(1/2  + \delta)/\delta$, in addition to one interval consisting of the last $(1/2 +\delta)m$ symbols in the received codeword. 

We use the algorithm of Theorem~\ref{binary-list} to list-decode each decoding window, and let $\mathcal{L}$ be the union of the lists for each window. Finally, we apply the algorithm of Theorem~\ref{pv-list} to $\mathcal{L}$ to obtain a list containing the original message. 
\medskip

{\bf Correctness}: Let $c=(c_1,\dotsc, c_N)$ be the originally transmitted codeword of $C'$. If an inner codeword $C_1(i, c_i)$ has suffered fewer than a $1/2 - 2\delta$ fraction of deletions, then one of the decoding windows is a substring of $C_1(i, c_i)$, and $\mathcal{L}$ will contain the correct pair $(i, c_i)$. 

When $\delta=\epsilon/4$, by a simple averaging argument, we have that an $\epsilon$ fraction of inner codewords have at most $1/2-2\delta$ fraction of positions deleted. For these inner codewords, $\mathcal{L}$ contains a correct decoding of the corresponding symbol of $c$. 
\medskip

In summary, we have list-decoded at most $N/\delta$ windows, with a list size of $O(1/\delta^2)$ each. We also have that an $\epsilon$ fraction of symbols in the outer codeword of $C'$ is correct. Setting $\ell=O(1/\delta^3)$ in the algorithm of Theorem~\ref{pv-list}, we can take $\alpha=\epsilon$. Theorem~\ref{pv-list} then guarantees that the decoder will output a list of $\poly(1/\epsilon)$ codewords, including the correct codeword $c$. 
\end{proof}

\section{Conclusion and open problems}
\label{sec:open}

In this work, we initiated a systematic study of codes for the adversarial deletion model, with an eye towards constructing codes achieving more-or-less the correct trade-offs at the high-noise and high-rate regimes.
There  are still several major gaps in our understanding of deletion codes, and below we highlight some of them 
(focusing only on the worst-case model):
\begin{enumerate}

\item For binary codes, what is the supremum $p^*$ of all
  fractions $p$ of adversarial deletions for which one can have
  positive rate? Clearly $p^* \le 1/2$; could it be that $p^* =
1/2$ and this trivial limit can be matched? Or is it the case that
$p^*$ is strictly less than $1/2$? Note that by~\cite{KMTU}, $p^*>.17$. 

\item The above question, but now for an alphabet of size $k$ --- at what value of $p^*(k)$ does the achievable rate against a fraction $p^*(k)$ of worst-case symbol deletions vanish? It is known that $\frac{1}{k} \le 1-p^*(k) \le O\left(\frac{1}{\sqrt{k}}\right)$ (the upper bound is established in Section~\ref{sec:exis}). Which (if either) bound is asymptotically the right one?

\item Can one construct codes of rate $1-p-\gamma$ to efficiently correct a fraction $p$ of deletions over an alphabet size that only depends on $\gamma$?

Note that this requires a relative distance of $p$, and currently we
only know algebraic-geometric and expander-based codes which achieve
such a tradeoff between rate and relative distance.

\item Can one improve the rate of the binary code construction to correct a fraction $\eps$ of deletions to $1-\eps \poly(\log (1/\eps))$, approaching more closely the existential $1 - O(\eps \log (1/\eps))$ bound?

In the case of errors, an approach using expanders gives the analogous tradeoff (see~\cite{sigact-exp} and references therein). Could such an approach be adapted to the setting of deletions? 

\item Can one improve the $N^{\poly(1/\eps)}$ type dependence of our construction and decoding complexity to, say, $\exp(\poly(1/\eps)) N^c$ for some exponent $c$ that doesn't depend on $\eps$?

\end{enumerate}

\bibliographystyle{abbrv}

\appendix
\section{Omitted proofs}

In this section, we give the omitted probabilistic proofs of Sections~\ref{sec:exis} and~\ref{sec:list}. 

\begin{proof}[Proof of Lemma~\ref{lem:count}]
We will give a way to generate all strings $s'$ containing $s$ as a subsequence, and bound the number of possible outcomes. We do this by considering the lexicographically first occurrence of $s$ in $t$. 
\smallskip

First choose $\ell$ locations $n_1<\dotsb <n_\ell$ in $[m]$, which will be the locations of the $\ell$ symbols of $s$. If the $i$th symbol of $s$ is $a$, we allow all symbols between locations $n_{i-1}$ and $n_i$ to take any value but $a$. This ensures that the locations $n_i$ are the {\em earliest} occurrence of $s$ as a subsequence. The rest of the symbols after $n_\ell$ are filled in arbitrarily. 

It is clear that this process generates any string having $s$ as a subsequence, so we will bound the number of ways this can happen. Fix $n_\ell=t$. There are 
\begin{itemize}
\item $\binom{t-1}{\ell-1}$ ways to choose $n_1,\dotsc, n_{\ell-1}$,
\item $(k-1)^{t-\ell}$ ways to fill in symbols between the $n_i$'s, 
\item and $k^{m-t}$ ways to fill in the last $m-t$ symbols.  
\end{itemize}

Summing over all possible values of $t$, the total number of strings with $s$ as a subsequence is at most 
\[\sum_{t=\ell}^m \binom{t-1}{\ell-1} k^{m-t}(k-1)^{t-\ell}.\]
As $\sum_{t=\ell}^m \binom{t-1}{\ell-1} = \binom{m}{\ell}$, the claimed bound follows. 

When $\ell>m/k$, the term $\binom{t-1}{\ell-1} k^{m-t}(k-1)^{t-\ell}$ increases with $t$, so the sum is at most 
\[\delta m\cdot \binom{m-1}{\ell-1} (k-1)^{m-\ell},\]
giving us our bound for $k=2$. 

\end{proof}

\begin{proof}[Proof of Theorem~\ref{low-del}] We construct such a code using a greedy algorithm. We begin with an arbitrary string in $[k]^m$, and then iteratively add strings whose LCS with all previously chosen strings has length less than $(1-\delta) m$. The LCS of two length $m$ strings can be computed in time $\poly(m)$, so this takes time $k^{O(m)}$. 

It remains to show that we can choose $k^{Rm}$ strings. 
\smallskip

For a fixed string $u\in[k]^m$, it has at most $\binom{m}{(1-\delta)m}$ subsequences of length $(1-\delta) m$, so by Lemma~\ref{lem:count}, the number of strings whose LCS with $u$ has length at least $(1-\delta) m$, and which therefore cannot be chosen, is at most 
\[\binom{m}{(1-\delta) m}^2k^{\delta m}.\]

Thus if the target rate is $R$, we will succeed if 
\[\binom{m}{\delta m}^2 k^{\delta m}\cdot k^{Rm}\leq k^m.\tag{*}\]

It suffices to have

\[2m h(\delta) + \delta m\log k + R m \log k\leq m\log k.\]

Setting $R=1-\delta - \gamma$, we have 
\[2h(\delta) + (1-\gamma)\log k\leq\log k \Leftrightarrow 2 h(\delta) \leq \gamma \log k,\]

so we can choose $k^{Rm}$ strings as long as the alphabet size $k$ satisfies 
\[k\geq2^{2h(\delta)/\gamma}.\]

In the case of $k=2$, we may use the tighter estimate from Lemma~\ref{lem:count} in Equation~(*) to obtain the claimed bound. 
\end{proof}

\begin{proof}[Proof of Proposition~\ref{thm:balanced}] The greedy algorithm of Theorem~\ref{low-del} applies, but now we must choose strings from the set of $\beta$-dense strings. We first bound the number of strings which are \emph{not} $\beta$-dense. The number of strings of length $\beta m$ with less than $\beta m/10$ $1$'s is 
\[\sum_{j=0}^{\beta m/10-1} \binom{\beta m}{j}\leq 2^{h(1/10) \beta m}.\]

Since there are at most $m$ intervals of length $\beta m$ in a string, the probability that a randomly chosen string of length $m$ is not $\beta$-dense is at most 
\[m\cdot \frac{2^{h(1/10)\beta m}}{2^{\beta m}}\leq 2^{-\Omega(\beta m)}.\]

The algorithm of Theorem~\ref{low-del} then succeeds if 
\[\binom{m}{\delta m}^2\cdot \delta m \cdot 2^{Rm}\leq 2^m\bigl(1-2^{-\Omega(\beta m)}\bigr),\]
or $R\leq1-2h(\delta) - O(\log (\delta m)/m) - 2^{-\Omega(\beta m)}/m$. 
\end{proof}

\begin{proof}[Proof of Theorem~\ref{binary-list}]

By Lemma~\ref{lem:count}, the probability that a set of $L$ independent, uniform strings all share a common substring of length $\ell$ is at most 
\[2^\ell \cdot \left(\sum_{t=\ell}^m \binom{t-1}{\ell-1} 2^{-t}\right)^L\leq 2^\ell \left[m^L\cdot 2^{-mL}\cdot \binom{m-1}{\ell-1}^L\right].\]

For a random code $C$ of rate $R$, we union bound over all possible subsets of $L$ codewords to upper bound the probability that $C$ is \emph{not} $(\delta,L)$ list-decodable from deletions. 
\[\Pr[\text{$C$ fails}]< 2^{RmL}\cdot 2^\ell\cdot 2^{L\log m}\cdot 2^{-mL} \cdot 2^{L m h(1-\delta)}.\]

This is at most $2^{-m}$, provided 
\[R\leq 1 - h(\delta) - \frac{2-\delta}{L}-\frac{\log m}{m},\]
which holds for our choice of $R$. 

When $\delta = 1/2 - \epsilon$, we can set $R=\Omega(\epsilon^2)$ to see that 
\[L>\frac{3/2 + \epsilon}{2\epsilon^2/\ln 2 - R - O(\epsilon^3)}\]
so we can take $L$ to be $O(1/\epsilon^2)$. 
\medskip

Similarly to Theorem~\ref{low-del}, this argument shows that we can construct a $\bigl(\delta, O(1/\epsilon^2)\bigr)$ list-decodable code using a greedy algorithm, which successively adds strings who do not share a common subsequence of length $\ell$ with $L-1$ previously chosen strings. 
\end{proof}

\end{document}